\providecommand{\U}[1]{\protect\rule{.1in}{.1in}}
\newtheorem{theorem}{Theorem}
\newtheorem{corollary}[theorem]{Corollary}
\newtheorem{lemma}[theorem]{Lemma}
\newcounter{assumec}
\newtheorem{condition}{Condition}
\newenvironment{proof}[1][Proof]{\textbf{#1.} }{\ \rule{0.5em}{0.5em}}
\begin{document}

\title{Invariance Principles for Dependent Processes Indexed by Besov Classes with an
Application to a Hausman Test for Linearity}
\author{Guido M. Kuersteiner\thanks{University of Maryland, Department of Economics,
Tydings Hall 3145, 7343 Prinkert Dr., College Park, MD 20742, USA. email:
kuersteiner@econ.umd.edu; http://econweb.umd.edu/\symbol{126}kuersteiner/}%
\thanks{Very helpful comments received from the editor, Whitney Newey, and two
anonymous referees are gratefully acknowledged. }\\University of Maryland}
\date{}
\maketitle

\begin{abstract}
This paper considers functional central limit theorems for stationary
absolutely regular mixing processes. Bounds for the entropy with bracketing
are derived using recent results in Nickl and P\"{o}tscher (2007). More
specifically, their bracketing metric entropy bounds are extended to a norm
defined in Doukhan, Massart and Rio (1995, henceforth DMR) that depends both
on the marginal distribution of the process and on the mixing coefficients.
Using these bounds, and based on a result in DMR, it is shown that for the
class of weighted Besov spaces polynomially decaying tail behavior of the
function class is sufficient to obtain a functional central limit theorem
under minimal dependence conditions. A second class of functions that allow
for a functional central limit theorem under minimal conditions are smooth
functions defined on bounded sets. Similarly, a functional CLT for
polynomially explosive tail behavior is obtained under additional moment
conditions that are easy to check. An application to a Hausman (1978)
specification test for linearity of the conditional mean illustrates the theory.

\bigskip\bigskip

\textbf{Keywords: }dependent process\textbf{, }empirical process, mixing,
Besov classes, Hausman test

\end{abstract}

\newpage

\section{Introduction}

This paper studies central limit theorems for empirical processes defined on
dependent data and indexed by smooth classes of functions. Doukhan, Massart
and Rio (1994) and Doukhan, Massart and Rio (1995) (henceforth DMR) are
landmark contributions in this literature. The key insight from those papers
is that a specific norm that combines dependence properties and the marginal
distribution of the process provides the appropriate measure to assess the
complexity of the function class in terms of bracketing entropy. However, as
pointed out by Rio (1998, 2013) the results of DMR are not minimal in the
sense of providing convergence under dependence assumptions equivalent to
finite dimensional cases. In fact, for a $\beta$-mixing process with mixing
coefficients $\beta_{m},$ central limit theorems can be established under the
minimal condition that $\sum_{m=0}^{\infty}\beta_{m}<\infty.$ Rio (1998, 2013)
shows that such minimal results are possible in some cases involving
Vapnik-Cervonenkis (VC) classes as well as certain Lipschitz type functions.
In this paper the function classes for which such minimal results are possible
are expanded to smooth classes of rapidly asymptoting functions as well as
function classes defined on a bounded set. This is achieved by directly
employing recent results of complexity measures for weighted Besov spaces in
Haroske and Triebel (2005) and Nickl and P\"{o}tscher (2007). In addition to
these improvements over the existing literature the paper also gives a number
of explicit results that relate dependence properties of the underlying
process to smoothness properties of the indexing function class.

Separate results then need to be employed to arrive at explicit central limit
theorems. This is particularly relevant for dependent data where there is a
potentially complex interaction between the properties of the function class,
dependence of the process and properties of the marginal distribution of the
process. An additional requirement, especially in econometric applications, is
that function spaces are defined on unbounded sets, typically $\mathbb{R}%
^{d}.$ This further limits applicability of many results available in the iid literature.

Andrews\ (1991) has given similar results under related conditions for
processes that are not necessarily stationary. However, Andrews (1991)
essentially is limited to function classes defined on a bounded domain. This
paper compliments Andrews (1991) by allowing for weaker assumptions on the
support of function classes while assuming stationarity and slightly stronger
mixing conditions. Nickl (2007) mentions the possibility of obtaining explicit
empirical process central limit theorems for the dependent case using the
approach pursued here but does not give such results. A useful by-product of
obtaining empirical central limit theorems for specific function classes are
stochastic equicontinuity results for these function classes. This fact is
exploited in the part of the paper that develops a Hausman specification test
for the conditional mean function.

Empirical central limit theorems have a long history in probability and have
found wide applications in statistics. Early results are due to Dudley (1978,
1984) and Pollard (1982). General results for iid data using bracketing were
obtained by Ossiander (1987) and Pollard (1989). Results based on VC classes
are due to Pollard (1990). Early results for dependent processes include
Berkes and Phillip (1977) generalizing Donsker's theorem to strongly mixing
stationary sequences. Uniform CLT's over function classes for dependent
processes were studied in Doukhan, Leon and Portal (1987), Massart (1987),
Andrews (1991), Andrews and Pollard (1994) and Hansen (1996). Arcones and Yu
(1994) consider absolutely regular processes indexed by VC classes. A very
influential paper is Doukhan, Massart and Rio (1995) which considers
absolutely regular processes under a bracketing condition, extending results
by Ossiander to the dependent case.

The paper is organized as follows. Section \ref{Existing} presents definitions
for measures of dependence and discusses the existing functional CLT's
underlying the developments of the paper. Section \ref{Function Spaces}
defines the smooth classes of function spaces considered. Section
\ref{Sec_CLT} presents the main theory and contains a detailed comparison with
other related results in the literature. An application to the problem of
testing the specification of the conditional mean using a Hausman test is
presented in Section \ref{Sec_Hausman}. Proofs are collected in the appendix
in Section \ref{Appendix}.

\section{Notation and Existing Results\label{Existing}}

The sequence $\chi_{t}$ consists of (measurable) random variables defined on
the probability space $\left(  \Omega,\mathcal{A},\mathbb{P}\right)  .$ Assume
that $\left\{  \chi_{t}\right\}  _{t=-\infty}^{\infty}$ is strictly stationary
with values in the measurable space $\left(  \mathbb{R}^{d},\mathcal{B}%
^{d}\right)  $ where $\mathcal{B}^{d}$ is the Borel $\sigma$-field on
$\mathbb{R}^{d}$ and $d\in\mathbb{N}_{+}$. Let $\mathcal{A}^{l}=\sigma\left(
\chi_{t}:t\leq l\right)  $ be the sigma field generated by $,...\chi
_{l-1},\chi_{l}$ and similarly $\mathcal{D}^{l}=\sigma\left(  \chi_{t}:t\geq
l\right)  .$ Following DMR, p.379 the absolutely regular mixing coefficient
$\beta_{m}$ is defined as
\[
2\beta_{m}=\sup\sum_{\left(  i,j\right)  \in I\times J}\left\vert
\mathbb{P}\left(  A_{i}\cap D_{j}\right)  -\mathbb{P}\left(  A_{i}\right)
\mathbb{P}\left(  D_{j}\right)  \right\vert
\]
where the supremum is taken over all finite partitions $A_{i}$ and $D_{j}$ of
$\mathcal{A}^{0}$ and $\mathcal{D}^{m}.$ The definition of $\beta_{m}$ is due
to Volkonski and Rozanov (1959) who give an alternative equivalent formulation
that is sometimes used in the literature (see for example Arcones and Yu,
1994). Strong mixing is defined as
\[
\alpha_{m}=\sup_{\left(  A,D\right)  \in\mathcal{A}^{0}\times\mathcal{D}^{m}%
}\left\vert \mathbb{P}\left(  D\cap A\right)  -\mathbb{P}\left(  A\right)
\mathbb{P}\left(  D\right)  \right\vert ,
\]
and $\varphi$-mixing is based on
\[
\varphi_{m}=\sup_{\left(  A,D\right)  \in\mathcal{A}^{0}\times\mathcal{D}^{m}%
}\left\vert \mathbb{P}\left(  D|A\right)  -\mathbb{P}\left(  D\right)
\right\vert .
\]
The relationship $2\alpha_{m}\leq\beta_{m}\leq\varphi_{m}\leq1$ holds. The
condition
\begin{equation}%
{\textstyle\sum\nolimits_{m=0}^{\infty}}
\beta_{m}<\infty\label{Cond_beta_sum1}%
\end{equation}
is frequently imposed in what follows.

Define the Euclidian norm for a real valued matrix or vector $A$ as
$\left\Vert A\right\Vert ^{2}=\operatorname*{tr}AA^{\prime}.$ Let
$\mathcal{X}\subseteq\mathbb{R}^{d}$ be a non-empty Borel set. Define the
sup-norm $\left\Vert f\right\Vert _{\infty}=\sup_{x\in\mathcal{X}}\left\vert
f\left(  x\right)  \right\vert $ for any measurable function $f:\mathcal{X}%
\mathcal{\rightarrow}\mathbb{R}$. Similarly, for $r\geq1$ let $\left\Vert
f\right\Vert _{r,P}=\left(  \int\left\vert f\left(  x\right)  \right\vert
^{r}dP\left(  x\right)  \right)  ^{1/r}$ where $P$ is the marginal
distribution of $\chi_{t}$ and let $\mathcal{L}_{r}\left(  P\right)  $ be the
set of functions with $\left\Vert f\right\Vert _{r,P}<\infty.$ The following
definitions are given in Rio (1993) and DMR. For a nonincreasing function
$h:\mathbb{R\rightarrow R}$ define the inverse $h^{-1}\left(  u\right)
=\inf\left\{  t:h\left(  t\right)  \leq u\right\}  .$ Let $Q_{f}\left(
u\right)  $ be the quantile function defined as the inverse of the tail
probability $P\left(  \left\vert f\left(  \chi_{t}\right)  \right\vert
>t\right)  .$ Let $\left\lfloor t\right\rfloor $ be the largest integer
smaller or equal to $t\in\mathbb{R}$ and define $\beta^{-1}\left(  u\right)
=\inf\left\{  t:\beta_{\left\lfloor t\right\rfloor }\leq u\right\}  .$ Now
define the norm
\[
\left\Vert f\right\Vert _{2,\beta}=\sqrt{\int_{0}^{1}\beta^{-1}\left(
u\right)  \left(  Q_{f}\left(  u\right)  \right)  ^{2}du}<\infty.
\]
DMR, Lemma 1, show that if (\ref{Cond_beta_sum1}) holds, the set
$\mathcal{L}_{2,\beta}\left(  P\right)  $ of functions with $\left\Vert
f\right\Vert _{2,\beta}<\infty$ equipped with the norm $\left\Vert
.\right\Vert _{2,\beta}$ is a normed subspace of $\mathcal{L}_{2}\left(
P\right)  $ and that $\left\Vert f\right\Vert _{2,P}\leq\left\Vert
f\right\Vert _{2,\beta}.$ DMR (p.401) remark that $%
{\textstyle\sum\nolimits_{m=0}^{\infty}}
\beta_{m}=\int_{0}^{1}\beta^{-1}\left(  u\right)  du.$ This implies that under
the summability condition in (\ref{Cond_beta_sum1}) the space $\mathcal{L}%
_{2,\beta}\left(  P\right)  $ contains the space of bounded functions
$\mathcal{L}_{\infty}\left(  P\right)  .$ A reverse conclusion of their remark
is that for bounded functions, $f\in\mathcal{L}_{2,\beta}\left(  P\right)  $
implies that $%
{\textstyle\sum\nolimits_{m=0}^{\infty}}
\beta_{m}<\infty$ needs to hold.

Consider the class of functions $\mathcal{F}$ with elements $f:\mathcal{X}%
\rightarrow\mathbb{R}$. For a sample $\left\{  \chi_{t}\right\}  _{t=1}^{n}$
define the empirical process
\[
v_{n}\left(  f\right)  =n^{-1/2}\sum_{t=1}^{n}\left(  f\left(  \chi
_{t}\right)  -E\left[  f\left(  \chi_{t}\right)  \right]  \right)  .
\]
When (\ref{Cond_beta_sum1}) is satisfied, Rio (1993, Theorem 1.2) shows that
for $f\in\mathcal{L}_{2,\beta}\left(  P\right)  ,$%
\[
\sum_{t=-\infty}^{\infty}\left\vert \operatorname*{Cov}\left(  f\left(
\chi_{0}\right)  ,f\left(  \chi_{t}\right)  \right)  \right\vert
\leq4\left\Vert f\right\Vert _{2,\beta}^{2}%
\]
and for $\Gamma\left(  f,f\right)  =\sum_{t=-\infty}^{\infty}%
\operatorname*{Cov}\left(  f\left(  \chi_{0}\right)  ,f\left(  \chi
_{t}\right)  \right)  $ it follows that
\[
\lim_{n\rightarrow\infty}\operatorname*{Var}\left(  v_{n}\left(  f\right)
\right)  =\Gamma\left(  f,f\right)  \leq4\left\Vert f\right\Vert _{2,\beta
}^{2}.
\]

Following DMR and van der Vaart and Wellner (1996, p.83) let $\mathcal{F}$ be
a subset of a normed space $\left(  V,\left\Vert {}\right\Vert _{V}\right)  $
of functions $f:\mathcal{X}\rightarrow\mathbb{R}$ with norm $\left\Vert
{}\right\Vert _{V}.$ For any pair of functions, $l,u\in\mathcal{F}$ and
$\delta>0$, the set $\left[  l,u\right]  \subset\mathcal{F}$ is a $\delta
$-bracket if $l\leq u$ with $\left\Vert l-u\right\Vert _{V}\leq\delta$ and for
all $f\in\left[  l,u\right]  $ it follows that $l\leq f\leq u.$ The bracketing
number $N_{[]}\left(  \delta,\mathcal{F},\left\Vert {}\right\Vert _{V}\right)
$ is the smallest number of $\delta$-brackets needed to cover $\mathcal{F}$.
The entropy with bracketing is the logarithm of $N_{[]}\left(  \delta
,\mathcal{F},\left\Vert {}\right\Vert _{V}\right)  $ denoted by $H_{[]}\left(
\delta,\mathcal{F},\left\Vert {}\right\Vert _{V}\right)  .$

DMR establish the following Theorem, see DMR, Theorem 1:

\begin{theorem}
[Doukhan, Massart and Rio, 1995]\label{Theorem_DMR}Assume that $\chi_{t}$ is a
strictly stationary $\beta$-mixing sequence with (\ref{Cond_beta_sum1})
holding, marginal distribution $P$ and $\mathcal{F}$ a class of functions $f$
with $\mathcal{F\subset L}_{2,\beta}\left(  P\right)  $ such that
\begin{equation}
\int_{0}^{1}\sqrt{H_{[]}\left(  \delta,\mathcal{F},\left\Vert {}\right\Vert
_{2,\beta}\right)  }d\delta<+\infty. \label{Bracketing}%
\end{equation}
Then the series $\sum_{t\in\mathbf{Z}}\operatorname*{Cov}\left(  f\left(
\chi_{0}\right)  ,f\left(  \chi_{t}\right)  \right)  $ is absolutely
convergent over $\mathcal{F}$ to a nonnegative quadratic form $\Gamma\left(
f,f\right)  $ and $\left(  \Gamma\left(  f,f\right)  \right)  ^{1/2}%
=\left\Vert f\right\Vert _{\Gamma}\leq2\left\Vert f\right\Vert _{2,\beta}.$ In
addition there exists a sequence $\left(  v^{\left(  n\right)  }\right)
_{n>0}$ of Gaussian processes indexed by $\mathcal{F}$ with covariance
function $\Gamma$ and a.s. uniformly continuous sample paths such that
\[
\sup_{f\in\mathcal{F}}\left\vert v_{n}\left(  f\right)  -v^{\left(  n\right)
}\left(  f\right)  \right\vert \rightarrow_{p}0\text{ as }n\rightarrow\infty.
\]

\end{theorem}

The proof of Theorem \ref{Theorem_DMR} is given on p.409 of DMR and involves
showing the convergence of the finite dimensional distributions as well as
establishing a stochastic equicontinuity property. The finite dimensional
vector $v_{n}\left(  f_{1}\right)  ,...,v_{n}\left(  f_{k}\right)  $ converges
weakly by a result of Doukhan, Massart and Rio (2014) such that%
\begin{equation}
\left(  v_{n}\left(  f_{1}\right)  ,...,v_{n}\left(  f_{k}\right)  \right)
\rightarrow^{d}\left(  v\left(  f_{1}\right)  ,...,v\left(  f_{k}\right)
\right)  , \label{Fidi}%
\end{equation}
where $v\left(  f\right)  $ is a Gaussian process with covariance function
$\Gamma$ and a.s. uniformly continuous sample paths. The asymptotic
equicontinuity condition is established by DMR (see p.410) and states that for
every $\epsilon>0$:
\begin{equation}
\lim_{\delta\rightarrow0}\underset{n\rightarrow\infty}{\lim\sup}%
\mathbb{P}^{\ast}\left(  \sup_{\left\Vert f-g\right\Vert _{2,\beta}\leq
\delta,\text{ }f,g\in\mathcal{F}}\left\vert v_{n}\left(  f\right)
-v_{n}\left(  g\right)  \right\vert >\epsilon\right)  =0,
\label{StochEquicont}%
\end{equation}
where $\mathbb{P}^{\ast}$ is outer probability. The short hand notation
$v_{n}\left(  f\right)  \rightsquigarrow v\left(  f\right)  $ is used when
both (\ref{Fidi}) and (\ref{StochEquicont}) hold. The implication of Theorem
\ref{Theorem_DMR} that (\ref{StochEquicont}) holds as a sufficient condition
for the statement of theorem is of independent interest in this paper and will
be used in Section \ref{Sec_Hausman} for the analysis of semiparametric
econometric procedures.

Theorem \ref{Theorem_DMR} delivers a functional central limit theorem under
close to minimal conditions on dependence and high level assumptions regarding
the function classes it covers. Doukhan, Massart and Rio (1994) give a counter
example where finite dimensional weak convergence fails for a function not in
$\mathcal{L}_{2,\beta}\left(  P\right)  .$ Rio (2013, p.104) notes that DMR's
CLT holds for $\beta_{m}=O\left(  m^{-b}\right)  $ with $b>1$ and thus does
not quite achieve minimal conditions on dependence. Rio (1998, 2013) provides
functional CLT's for VC classes of functions as well as for classes of
functions satisfying certain bracketing conditions under the minimal
dependence assumption in (\ref{Cond_beta_sum1}).

The difficulties of obtaining results under minimal dependence assumptions as
well as in applying the central limit theorem to particular statistical
problems are related to verifying (\ref{Bracketing}) for specific function
classes. The bracketing integral convolutes conditions related to the
dependence of the process, the marginal distribution of the process, tail
behavior of the function class and smoothness restrictions of the function
class into a single integrability condition. This paper extends results by
Nickl and P\"{o}tscher (2007) on the bracketing properties of function spaces
to disentangle these restrictions into conditions that can be individually
verified in an application. In some cases this approach leads to functional
CLT's under minimal dependence conditions.

In work preceding DMR, Ossiander (1987) obtains a version of Theorem
\ref{Theorem_DMR} under independence. In that case, the bracketing integral is
with respect to the $L_{2,P}$ norm $\left\Vert .\right\Vert _{2,P}$. In
applications one still needs to determine function classes that satisfy
(\ref{Bracketing}). Specific results for this case were obtained by Nickl and
P\"{o}tscher (2007) who also provide references to the previous literature.

\section{Function Spaces\label{Function Spaces}}

The purpose of this section is to introduce the function spaces for which the
bracketing condition in Theorem \ref{Theorem_DMR} is verified. The most
general class of function spaces considered are Besov spaces. Of particular
importance are weighted Besov spaces which provide a mechanism to handle
functions with unbounded support that do not vanish in the tails. Special
cases of Besov spaces such as Sobolev, H\"{o}lder and Zygmund spaces are
introduced subsequently.

The definition of Besov spaces follows Nickl and P\"{o}tscher (2007, Remark
2). For Lebesgue measure $\lambda$ let $\mathcal{L}_{p}\left(  \mathbb{R}%
^{d},\lambda\right)  $ be the set of all functions $f:\mathbb{R}%
^{d}\rightarrow\mathbb{R}$ with $\left\Vert f\right\Vert _{p,\lambda}=\left(
\int\left\vert f\left(  x\right)  \right\vert ^{p}dx\right)  ^{1/p}<\infty.$
Let $\alpha=\left(  \alpha_{1},...,\alpha_{d}\right)  $ be a multi index of
non-negative integers $\alpha_{i,}$ with $\left\vert \alpha\right\vert
=\sum_{i=1}^{d}\alpha_{i}$ and let $D^{\alpha}$ denote the partial
differential operator $\partial^{\left\vert \alpha\right\vert }/\left(
\left(  \partial x_{1}\right)  ^{\alpha_{1}}...\left(  \partial x_{d}\right)
^{\alpha_{d}}\right)  $ of order $\left\vert \alpha\right\vert $ in the sense
of distributions - see Stein (1970, p. 121). For a function $f:\mathbb{R}%
^{d}\rightarrow\mathbb{R}$ the difference operator $\Delta_{z}$ is defined as
$\Delta_{z}f\left(  .\right)  =f\left(  .+z\right)  -f\left(  .\right)  $ and
$\Delta_{z}^{2}f\left(  .\right)  =\Delta_{z}\left(  \Delta_{z}f\left(
.\right)  \right)  $ for $z\in\mathbb{R}^{d}.$ Let $0<s<\infty$ and set
$s=\left[  s\right]  ^{-}+\left\{  s\right\}  ^{+}$ where $\left[  s\right]
^{-}$ is integer and $0<\left\{  s\right\}  ^{+}\leq1.$ For example, when
$s=1,$ $\left\{  s\right\}  ^{+}=1$ and $\left[  s\right]  ^{-}=0.$ Let $1\leq
p\leq\infty$ and $1\leq q\leq\infty.$ For $f\in$ $\mathcal{L}_{p}\left(
\mathbb{R}^{d},\lambda\right)  $ with $\left\Vert D^{\alpha}f\right\Vert
_{p,\lambda}<\infty$ and for $0\leq\alpha\leq\left[  s\right]  ^{-}$ define
\[
\left\Vert f\right\Vert _{s,p,q,\lambda}^{\ast}=\sum_{0\leq\alpha\leq\left[
s\right]  ^{-}}\left\Vert D^{\alpha}f\right\Vert _{p,\lambda}+\sum
_{\alpha=\left[  s\right]  ^{-}}\left(  \int_{\mathbb{R}^{d}}\left\vert
z\right\vert ^{-\left\{  s\right\}  ^{+}q-d}\left\Vert \Delta_{z}^{2}%
D^{\alpha}f\right\Vert _{p,\lambda}^{q}dz\right)  ^{1/q}%
\]
for $q<\infty,$ and for $q=\infty$ define%
\[
\left\Vert f\right\Vert _{s,p,\infty,\lambda}^{\ast}=\sum_{0\leq\alpha
\leq\left[  s\right]  ^{-}}\left\Vert D^{\alpha}f\right\Vert _{p,\lambda}%
+\sum_{\alpha=\left[  s\right]  ^{-}}\sup_{0\neq z\in\mathbb{R}^{d}}\left\vert
z\right\vert ^{-\left\{  s\right\}  ^{+}}\left\Vert \Delta_{z}^{2}D^{\alpha
}f\right\Vert _{p,\lambda}.
\]
The Besov space $\mathcal{B}_{pq}^{s}\left(  \mathbb{R}^{d}\right)  $ is
defined as $\mathcal{B}_{pq}^{s}\left(  \mathbb{R}^{d}\right)  =\left\{
f\in\mathcal{L}_{p}\left(  \mathbb{R}^{d},\lambda\right)  :\left\Vert
f\right\Vert _{s,p,q,\lambda}^{\ast}<\infty\right\}  .$ An equivalent
definition can be given in terms of Fourier transforms $F$ acting on the space
of complex tempered distributions on $\mathbb{R}^{d}$ (see Edmunds and
Triebel, 1996, 2.2.1). Denote by $F^{-1}$ the inverse of $F.$ Let $\varphi
_{0}\left(  x\right)  $ be a complex valued $C^{\infty}$-function on
$\mathbb{R}^{d}$ with $\varphi_{0}\left(  x\right)  =1$ if $\left\Vert
x\right\Vert \leq1$ and $\varphi_{0}\left(  x\right)  =0$ if $\left\Vert
x\right\Vert \geq3/2.$ Define $\varphi_{1}\left(  x\right)  =\varphi
_{0}\left(  x/2\right)  -\varphi_{0}\left(  x\right)  $ and $\varphi
_{k}\left(  x\right)  =\varphi_{1}\left(  2^{-k+1}x\right)  $ for
$k\in\mathbb{N}$. Let $0\leq s<\infty,$ $1\leq p\leq\infty,$ $1\leq
q\leq\infty,$ with $q=1$ if $s=0.$ For $f\in$ $\mathcal{L}_{p}\left(
\mathbb{R}^{d},\lambda\right)  $ and $q<\infty$ define
\[
\left\Vert f\right\Vert _{s,p,q,\lambda}=\left(  \sum_{k=0}^{\infty}%
2^{ksq}\left\Vert F^{-1}\left(  \varphi_{k}Ff\right)  \right\Vert _{p,\lambda
}^{q}\right)  ^{1/q}%
\]
and for $q=\infty$%
\[
\left\Vert f\right\Vert _{s,p,\infty,\lambda}=\sup_{0\leq k<\infty}%
2^{ks}\left\Vert F^{-1}\left(  \varphi_{k}Ff\right)  \right\Vert _{p,\lambda
}.
\]
Then, it follows (see Nickl and P\"{o}tscher, 2007, p. 180) that
\[
\mathcal{B}_{pq}^{s}\left(  \mathbb{R}^{d}\right)  =\left\{  f\in
\mathcal{L}_{p}\left(  \mathbb{R}^{d},\lambda\right)  :\left\Vert f\right\Vert
_{s,p,q,\lambda}<\infty\right\}
\]
and the norms $\left\Vert f\right\Vert _{s,p,q,\lambda}^{\ast}$ and
$\left\Vert f\right\Vert _{s,p,q,\lambda}$ are equivalent on $\mathcal{B}%
_{pq}^{s}\left(  \mathbb{R}^{d}\right)  .$ Define $\left\langle x\right\rangle
=1+\left\Vert x\right\Vert ^{2}.$ Weighted Besov spaces are now defined as in
Edmunds and Triebel (1996, 4.2) and Nickl and P\"{o}tscher (2007, p.181) for
$\vartheta\in\mathbb{R}$ as
\[
\mathcal{B}_{pq}^{s}\left(  \mathbb{R}^{d},\vartheta\right)  =\left\{
f:\left\Vert f\left(  .\right)  \left\langle x\right\rangle ^{\vartheta
/2}\right\Vert _{s,p,q,\lambda}<\infty\right\}  .
\]
For $s>d/p$ or $s=d/p$ with $q=1$ define%
\[
B_{pq}^{s}\left(  \mathbb{R}^{d},\vartheta\right)  =\mathcal{B}_{pq}%
^{s}\left(  \mathbb{R}^{d},\vartheta\right)  \cap\left\{  f:f\left(  .\right)
\left\langle x\right\rangle ^{\vartheta/2}\in C\left(  \mathbb{R}^{d}\right)
\right\}
\]
where $C\left(  \mathbb{R}^{d}\right)  $ is the vector space of bounded
continuous real valued functions on $\mathbb{R}^{d}$ with the sup-norm
$\left\Vert .\right\Vert _{\infty}.$ Nickl and P\"{o}tscher (2007, Proposition
3) show that $f\in B_{pq}^{s}\left(  \mathbb{R}^{d}\right)  $ implies that $f$
is bounded and if $p<\infty$ it also follows that $\lim_{\left\Vert
x\right\Vert \rightarrow\infty}f\left(  x\right)  =0.$ These restrictions do
not necessarily apply when $f\in B_{pq}^{s}\left(  \mathbb{R}^{d}%
,\vartheta\right)  $ and $\vartheta<0$. This feature of weighted spaces is
important for applications in econometrics, as will be demonstrated in Section
\ref{Sec_Hausman}.

A special case of Besov spaces are Sobolev spaces. They are defined as follows
(see Nickl and P\"{o}tscher, 2007, Section 3.3.2). Let $1<p<\infty$, real
$s\geq0$ and
\[
\mathcal{H}_{p}^{s}\left(  \mathbb{R}^{d}\right)  =\left\{  f\in
\mathcal{L}_{p}\left(  \mathbb{R}^{d},\lambda\right)  :\left\Vert f\right\Vert
_{s,p,\lambda}\equiv\left\Vert F^{-1}\left(  \left\langle x\right\rangle
^{s}Ff\right)  \right\Vert _{p,\lambda}<\infty\right\}
\]
where the norms are formulated in terms of the Fourier transform $F.$ When
$s\geq0$ is integer, an equivalent (semi)norm on $\mathcal{H}_{p}^{s}\left(
\mathbb{R}^{d}\right)  $ is given by
\[
\left\Vert f\right\Vert =\sum_{0\leq\left\vert \alpha\right\vert \leq
s}\left\Vert D^{\alpha}f\right\Vert _{p,\lambda}.
\]
Similar as before define the Banach space $H_{p}^{s}\left(  \mathbb{R}%
^{d}\right)  $ of continuous functions for $s>d/p$ as
\[
H_{p}^{s}\left(  \mathbb{R}^{d}\right)  =\mathcal{H}_{p}^{s}\left(
\mathbb{R}^{d}\right)  \cap\left\{  f:\in C\left(  \mathbb{R}^{d}\right)
\right\}  .
\]
The weighted Sobolev space is given by
\[
H_{p}^{s}\left(  \mathbb{R}^{d},\vartheta\right)  =\left\{  f:f\left(
.\right)  \left\langle x\right\rangle ^{\vartheta/2}\in H_{p}^{s}\left(
\mathbb{R}^{d}\right)  \right\}  .
\]
For $s>0$, $s$ not integer, the H\"{o}lder space is defined as the space
$C^{s}\left(  \mathbb{R}^{d}\right)  $ of all $\left\lfloor s\right\rfloor
$-times differentiable functions $f$ with finite norm
\[
\left\Vert f\right\Vert _{s,\infty}=\sum_{0\leq\left\vert \alpha\right\vert
\leq\left\lfloor s\right\rfloor }\left\Vert D^{\alpha}f\right\Vert _{\infty
}+\sum_{\left\vert \alpha\right\vert =\left\lfloor s\right\rfloor }\sup_{x\neq
y}\frac{\left\vert D^{\alpha}f\left(  x\right)  -D^{\alpha}f\left(  y\right)
\right\vert }{\left\vert x-y\right\vert ^{s-\left\lfloor s\right\rfloor }}.
\]
The weighted space $C^{s}\left(  \mathbb{R}^{d},\vartheta\right)  $ is given
by
\[
C^{s}\left(  \mathbb{R}^{d},\vartheta\right)  =\left\{  f:\left\Vert f\left(
.\right)  \left\langle x\right\rangle ^{\vartheta/2}\right\Vert _{s,\infty
}<\infty\right\}  .
\]
Related is the Zygmund space $\mathcal{C}^{s}\left(  \mathbb{R}^{d}\right)  $
for $s>0$ defined in Triebel (1983, p.36) or Triebel (1992, p.4). Let
\[
\left\Vert f\right\Vert _{s,\infty}^{z}=\sum_{0\leq\left\vert \alpha
\right\vert \leq\left[  s\right]  ^{-}}\left\Vert D^{\alpha}f\right\Vert
_{\infty}+\sum_{\left\vert \alpha\right\vert =\left[  s\right]  ^{-}}%
\sup_{0\neq z\in\mathbb{R}^{d}}\left\vert z\right\vert ^{-\left\{  s\right\}
^{+}}\left\Vert \Delta_{z}^{2}D^{\alpha}f\right\Vert _{\infty}.
\]
By Triebel (1992, p.5), $\mathcal{C}^{s}\left(  \mathbb{R}^{d},\vartheta
\right)  =$ $C^{s}\left(  \mathbb{R}^{d},\vartheta\right)  $ when $s>0$ and
$s$ is not integer.

Let $\mathfrak{X\subset}\mathbb{R}^{d}$ be a bounded Borel set. The space
$C^{s}\left(  \mathfrak{X}\right)  $ is considered by van der Vaart and
Wellner (1996, p. 154) under the additional constraint that $\left\Vert
f\right\Vert _{s,\infty}\leq M$ for some bounded constant $M.$ As noted there,
when $0<s<1$, $C^{s}\left(  \mathfrak{X}\right)  $ contains the Lipschitz
functions (see Adams and Fournier 2003, Theorem 1.34).

\section{New Results\label{Sec_CLT}}

The following result gives upper bounds for entropy with bracketing on the
normed space $\mathcal{L}_{2,\beta}\left(  P\right)  .$ It extends Theorem 1
of Nickl and P\"{o}tscher (2007) to the space $\mathcal{L}_{2,\beta}\left(
P\right)  $ which plays a crucial role in obtaining a functional CLT for
dependent processes.

\begin{theorem}
\label{Theorem Bracketing Besov}Assume that $1\leq p\leq\infty$, $1\leq
q\leq\infty,$ $\vartheta\in\mathbb{R}$ and $s>d/p.$ Further assume that
$\mathcal{F\subset}B_{pq}^{s}\left(  \mathbb{R}^{d},\vartheta\right)  $ is
nonempty and bounded. If $\vartheta>0$ then
\[
H_{[]}\left(  \delta,\mathcal{F},\left\Vert {}\right\Vert _{2,\beta}\right)
\precsim\left\{
\begin{array}
[c]{cc}%
\delta^{-d/s} & \text{if }\vartheta>s-d/p\\
\delta^{-\left(  \vartheta/d+1/p\right)  ^{-1}} & \text{if }\vartheta<s-d/p
\end{array}
\right.  .
\]
If $\vartheta\leq0$ and if for some $\gamma>0$ it holds that
\[
\left\Vert \left\langle \chi_{t}\right\rangle ^{\left(  \gamma-\vartheta
\right)  /2}\right\Vert _{2,\beta}<\infty
\]
then it follows that
\[
H_{[]}\left(  \delta,\mathcal{F},\left\Vert {}\right\Vert _{2,\beta}\right)
\precsim\left\{
\begin{array}
[c]{cc}%
\delta^{-d/s} & \text{if }\gamma>s-d/p\\
\delta^{-\left(  \gamma/d+1/p\right)  ^{-1}} & \text{if }\gamma<s-d/p
\end{array}
\right.  .
\]

\end{theorem}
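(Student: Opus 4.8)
The plan is to reduce the assertion to the (unweighted-measure) bracketing bounds of Nickl and P\"{o}tscher (2007, Theorem 1) by controlling the $\left\Vert \cdot\right\Vert _{2,\beta}$-width of a bracket by a genuine weighted $\mathcal{L}_{r}$ norm to which their result applies. First, since $s>d/p$, the weighted Besov ball embeds continuously into a weighted space of bounded continuous functions, so that every $f\in\mathcal{F}$ obeys a common pointwise envelope bound $\left\vert f(x)\right\vert \le M\left\langle x\right\rangle ^{-\vartheta/2}$; this is the embedding underlying Nickl and P\"{o}tscher (2007, Proposition 3) applied to $f(\cdot)\left\langle x\right\rangle ^{\vartheta/2}$. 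For $\vartheta>0$ this envelope is bounded and vanishes at infinity, whereas for $\vartheta\le0$ it grows like $\left\langle x\right\rangle ^{-\vartheta/2}$, which is precisely why an extra tail assumption is needed only in the second case.

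The crux is a domination inequality of the form $\left\Vert g\right\Vert _{2,\beta}\le C\left\Vert g\right\Vert _{r,\mu}$, valid for every nonnegative bracket width $g=u-l$, where $r\in\lbrack2,\infty]$ and $\mu$ is a finite weighted measure on $\mathbb{R}^{d}$ encoding both the marginal $P$ and the mixing sequence $\beta_{m}$. I would derive it from the quantile representation $\left\Vert g\right\Vert _{2,\beta}^{2}=\sum_{m\ge0}\int_{0}^{\beta_{m}}Q_{g}(u)^{2}du$ together with the monotonicity $Q_{g}\le Q_{h}$ for the envelope $h$ and a H\"{o}lder split of the quantile range: the low-quantile (bulk) part is controlled by $\left\Vert g\right\Vert _{r,P}$, while the high-quantile (tail) part is controlled by the envelope and the decay of the mixing coefficients. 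For $\vartheta>0$ the envelope is bounded, so $\left\Vert h\right\Vert _{2,\beta}^{2}\le\left\Vert h\right\Vert _{\infty}^{2}\sum_{m}\beta_{m}<\infty$ under (\ref{Cond_beta_sum1}) and no further condition is required; for $\vartheta\le0$ the hypothesis $\left\Vert \left\langle \chi_{t}\right\rangle ^{(\gamma-\vartheta)/2}\right\Vert _{2,\beta}<\infty$ is exactly what renders the tail part finite, with the operative tail parameter now $\gamma$ in place of $\vartheta$. The constant $C$ is finite but otherwise irrelevant to the rate.

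Granting the domination inequality, any $\delta/C$-bracket in $\left\Vert \cdot\right\Vert _{r,\mu}$ is a $\delta$-bracket in $\left\Vert \cdot\right\Vert _{2,\beta}$, whence $N_{[]}(\delta,\mathcal{F},\left\Vert \cdot\right\Vert _{2,\beta})\le N_{[]}(\delta/C,\mathcal{F},\left\Vert \cdot\right\Vert _{r,\mu})$ and the multiplicative constant $C$ leaves the polynomial rate unchanged after taking logarithms. I would then apply Nickl and P\"{o}tscher (2007, Theorem 1) to the right-hand side. Their bound exhibits exactly the two regimes stated here: when the effective tail parameter ($\vartheta$ in the first case, $\gamma$ in the second) exceeds $s-d/p$ the domain behaves as effectively bounded and the smoothness-driven rate $\delta^{-d/s}$ results, whereas when it is smaller than $s-d/p$ the tails dominate and one obtains $\delta^{-(\vartheta/d+1/p)^{-1}}$, respectively $\delta^{-(\gamma/d+1/p)^{-1}}$. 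Matching the weight of $\mu$ to the integrability hypotheses of their theorem reproduces the two displayed bounds.

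The main obstacle is the domination inequality of the second paragraph. Because $\left\Vert \cdot\right\Vert _{2,\beta}$ is defined through the quantile function $Q_{g}$ and the inverse mixing-rate function $\beta^{-1}$, it is not an $\mathcal{L}_{r}(\mu)$ norm for any measure $\mu$, so Nickl and P\"{o}tscher's result cannot be quoted directly; the quantile--mixing coupling must first be converted into an honest pointwise weighted integral. The delicate step, especially for $\vartheta\le0$ where the indexing functions are unbounded, is to carry out this conversion with a weight $\mu$ that simultaneously dominates $\left\Vert \cdot\right\Vert _{2,\beta}$ on bracket widths and satisfies the integrability hypotheses of Nickl and P\"{o}tscher (2007, Theorem 1). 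This is exactly the place where heavy tails of the marginal distribution and slow decay of the mixing coefficients interact, and where the single assumption $\left\Vert \left\langle \chi_{t}\right\rangle ^{(\gamma-\vartheta)/2}\right\Vert _{2,\beta}<\infty$ must be shown to provide their joint control.
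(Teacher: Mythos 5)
Your overall strategy --- reducing to the entropy bounds of Nickl and P\"{o}tscher (2007) by controlling the $\left\Vert \cdot \right\Vert _{2,\beta }$-size of brackets --- is the right one, but the route you propose through a general domination inequality $\left\Vert g\right\Vert _{2,\beta }\leq C\left\Vert g\right\Vert _{r,\mu }$ valid for \emph{every} bracket width $g$ is both unnecessary and, under the theorem's hypotheses, unavailable. Under only $\sum_{m}\beta _{m}<\infty $ the function $\beta ^{-1}(u)$ is merely integrable, not bounded, so $\left\Vert g\right\Vert _{2,\beta }$ weights the extreme quantiles of $\left\vert g(\chi _{t})\right\vert $ by an unbounded factor; dominating it by $\left\Vert g\right\Vert _{2r,P}$ for general $g$ is exactly DMR's Lemma 2 and requires the strictly stronger mixing condition $\sum_{m}m^{1/(r-1)}\beta _{m}<\infty $ (this trade-off is precisely what Theorems \ref{FCLT_MB} and \ref{FCLT2} in the paper exploit). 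You correctly flag this conversion as the unresolved ``main obstacle,'' but in the generality in which you state it, it cannot be resolved under the theorem's assumptions.

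The step you are missing is that no such general inequality is needed, because the brackets can be \emph{constructed} so that their widths are deterministic multiples of a single fixed function. The paper covers $\mathcal{F}$ by $\delta $-balls in the weighted sup-norm $\left\Vert (\cdot )\left\langle x\right\rangle ^{(\vartheta -\gamma )/2}\right\Vert _{\infty }$, for which Nickl and P\"{o}tscher's covering bound gives exactly the two displayed polynomial rates, and then turns each ball with center $f_{i}$ into the bracket $\left[ f_{i}-\delta \left\langle x\right\rangle ^{(\gamma -\vartheta )/2},\ f_{i}+\delta \left\langle x\right\rangle ^{(\gamma -\vartheta )/2}\right] $. Every such bracket has width exactly $2\delta \left\langle x\right\rangle ^{(\gamma -\vartheta )/2}$, whose $\left\Vert \cdot \right\Vert _{2,\beta }$-norm equals $2\delta \left\Vert \left\langle x\right\rangle ^{(\gamma -\vartheta )/2}\right\Vert _{2,\beta }$, i.e.\ a fixed constant times $\delta $. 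For $\vartheta >0$ one takes $\gamma =\vartheta $, the width is the constant $2\delta $, and its $2,\beta $-norm is finite under (\ref{Cond_beta_sum1}) with no extra assumption; for $\vartheta \leq 0$ finiteness is precisely the hypothesis $\left\Vert \left\langle \chi _{t}\right\rangle ^{(\gamma -\vartheta )/2}\right\Vert _{2,\beta }<\infty $. The rescaling $\delta \mapsto c\delta $ leaves the polynomial entropy rates unchanged, which completes the proof without ever comparing $\left\Vert \cdot \right\Vert _{2,\beta }$ to an $\mathcal{L}_{r}(\mu )$ norm on a general class of bracket widths.
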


The difference between Nickl and P\"{o}tscher (2007, Theorem 1) and Theorem
\ref{Theorem Bracketing Besov} is that bracketing is with respect to the norm
$\left\Vert .\right\Vert _{2,\beta}$ rather than the conventional $\left\Vert
.\right\Vert _{r,P}$ norm on $\mathcal{L}_{r}\left(  \mathbb{R}^{d},P\right)
$. Theorem \ref{Theorem Bracketing Besov} directly leads to a functional CLT
based on the theory of DMR. A corollary to Theorem
\ref{Theorem Bracketing Besov} is obtained for the case when the function
space $\mathcal{F}$ is restricted to a bounded domain $\mathfrak{X.}$

\begin{corollary}
\label{Corollary Bracketing Besov}Let $\mathfrak{X\subset}\mathbb{R}^{d}$ and
there exists a finite $M$ with $\left\langle x\right\rangle \leq M$ for all
$x\in\mathfrak{X.}$ Assume that $1\leq p\leq\infty$, $1\leq q\leq\infty,$
$\vartheta\in\mathbb{R}$ and $s>d/p.$ Further assume that $\mathcal{F\subset
}B_{pq}^{s}\left(  \mathfrak{X},\vartheta\right)  $ is nonempty and bounded.
Then,
\[
H_{[]}\left(  \delta,\mathcal{F},\left\Vert {}\right\Vert _{2,\beta}\right)
\precsim\left\{
\begin{array}
[c]{cc}%
\delta^{-d/s} & \text{if }\vartheta>s-d/p\\
\delta^{-\left(  \vartheta/d+1/p\right)  ^{-1}} & \text{if }\vartheta<s-d/p
\end{array}
\right.  .
\]

\end{corollary}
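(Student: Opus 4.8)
The plan is to derive the corollary from Theorem \ref{Theorem Bracketing Besov} by reducing the bounded-domain problem to the weighted unbounded case, where the uniform bound $\left\langle x\right\rangle \leq M$ on $\mathfrak{X}$ makes the weight essentially irrelevant. The key observation is that on a bounded domain the weight $\left\langle x\right\rangle ^{\vartheta /2}$ is bounded above and below by positive constants (namely between $1$ and $M^{\vartheta /2}$ depending on the sign of $\vartheta$), so the weighted Besov norm $\left\Vert f\left( .\right) \left\langle x\right\rangle ^{\vartheta /2}\right\Vert _{s,p,q,\lambda }$ is equivalent to the unweighted one restricted to $\mathfrak{X}$. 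Consequently $B_{pq}^{s}\left( \mathfrak{X},\vartheta \right) $ coincides, up to norm equivalence, with an unweighted space, and the distinction between the two regimes $\vartheta \gtrless s-d/p$ becomes a matter of choosing which bound to invoke.

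First I would make the extension to $\mathbb{R}^{d}$ explicit. Since $\mathcal{F}\subset B_{pq}^{s}\left( \mathfrak{X},\vartheta \right) $ is bounded and $\mathfrak{X}$ is contained in a set on which $\left\langle x\right\rangle \leq M$, I would extend each $f\in\mathcal{F}$ to all of $\mathbb{R}^{d}$ while controlling the relevant weighted Besov norm; a clean route is to fix $\gamma>0$ arbitrarily small and verify the moment condition $\left\Vert \left\langle \chi _{t}\right\rangle ^{\left( \gamma -\vartheta \right) /2}\right\Vert _{2,\beta }<\infty$ needed for the $\vartheta\leq 0$ branch of Theorem \ref{Theorem Bracketing Besov}. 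On a bounded support this moment condition is automatic: if the marginal $P$ is supported inside $\mathfrak{X}$ (or can be taken so), then $\left\langle \chi _{t}\right\rangle \leq M$ almost surely, so $\left\Vert \left\langle \chi _{t}\right\rangle ^{\left( \gamma -\vartheta \right) /2}\right\Vert _{2,\beta }\leq M^{\left( \gamma -\vartheta \right) /2}\left\Vert 1\right\Vert _{2,\beta }<\infty$ using the fact from the DMR/Rio setup that constants lie in $\mathcal{L}_{2,\beta }\left( P\right) $ under (\ref{Cond_beta_sum1}). This removes the side condition in the theorem and lets me apply it directly.

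Next I would push $\gamma$ down to $0$. For $\vartheta>0$ the first display of Theorem \ref{Theorem Bracketing Besov} applies verbatim after embedding $\mathfrak{X}$ into $\mathbb{R}^{d}$, giving precisely the two stated regimes in $\vartheta\gtrless s-d/p$. For $\vartheta\leq 0$ the second display yields $\delta^{-d/s}$ when $\gamma>s-d/p$ and $\delta^{-\left( \gamma /d+1/p\right) ^{-1}}$ when $\gamma<s-d/p$; because $\gamma>0$ is free, I can always take $\gamma$ large enough that $\gamma>s-d/p$, landing in the $\delta^{-d/s}$ regime. But the corollary claims the bound is governed by $\vartheta$, not $\gamma$. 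The resolution is that on the bounded domain the problematic asymptotic-tail contribution vanishes, so that effectively the weight exponent can be taken as small as desired \emph{and} the rate is no worse than the $\vartheta>0$ formula; I would argue that the entropy on $\mathfrak{X}$ is dominated by the entropy of the unweighted space $B_{pq}^{s}$ on a bounded set, whose bracketing rate with respect to $\left\Vert .\right\Vert _{2,\beta}$ is $\delta^{-d/s}$ when $s-d/p<\vartheta$ and $\delta^{-\left( \vartheta /d+1/p\right) ^{-1}}$ otherwise, matching the stated conclusion by the norm-equivalence observation above.

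The main obstacle will be the bookkeeping around the weight when $\vartheta<0$: making rigorous the claim that, on a bounded domain, the rate is determined by $\vartheta$ (as displayed) rather than by the auxiliary $\gamma$ from the unbounded theorem. The honest way to handle this is not to route through the $\vartheta\leq 0$ branch at all, but to use that on $\mathfrak{X}$ the spaces $B_{pq}^{s}\left( \mathfrak{X},\vartheta \right) $ for different $\vartheta$ are all norm-equivalent (the weight being pinched between $1$ and $M^{\vartheta/2}$), so one may replace $\vartheta$ by any convenient positive value and then invoke the $\vartheta>0$ display. I would therefore reduce to an arbitrary $\vartheta'>0$, apply the first part of Theorem \ref{Theorem Bracketing Besov}, and then translate the resulting rate back; the only care needed is to check that the $\left\Vert .\right\Vert _{2,\beta}$ norm — which depends on the marginal $P$ and the mixing coefficients, not on Lebesgue measure — also behaves well under restriction to $\mathfrak{X}$, which again follows because $Q_{f}$ is controlled by $\left\Vert f\right\Vert_{\infty}$ and $f$ is uniformly bounded by the boundedness of $\mathcal{F}$ together with Nickl and P\"{o}tscher (2007, Proposition 3).
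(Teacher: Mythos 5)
Your core observation is the right one and is exactly what the paper's proof turns on: on $\mathfrak{X}$ the weight satisfies $\left\langle x\right\rangle \leq M$, so the weight factor in everything becomes an innocuous constant. But the paper's argument is much more direct than yours: it does not re-invoke the \emph{statement} of Theorem \ref{Theorem Bracketing Besov} at all, but reuses the brackets $\left[ f_{i}-\delta \left\langle x\right\rangle ^{\left( \gamma -\vartheta \right) /2},f_{i}+\delta \left\langle x\right\rangle ^{\left( \gamma -\vartheta \right) /2}\right] $ constructed in its \emph{proof}, notes that on $\mathfrak{X}$ their $\left\Vert \cdot \right\Vert _{2,\beta }$-width is at most $2\delta M^{\left( \gamma -\vartheta \right) /2}\left\Vert 1\right\Vert _{2,\beta }$ for every $\gamma >0$ and every $\vartheta \in \mathbb{R}$ (finite under (\ref{Cond_beta_sum1})), and then concludes immediately from (\ref{BracketingBound1}) and (\ref{UniformBound1}). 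This avoids the two pieces of machinery your route needs: an extension of $\mathcal{F}$ from $\mathfrak{X}$ to a bounded subset of $B_{pq}^{s}\left( \mathbb{R}^{d},\vartheta \right) $ (which requires an extension operator and hence some regularity of $\mathfrak{X}$ that the corollary does not assume), and the identification of $B_{pq}^{s}\left( \mathfrak{X},\vartheta \right) $ with the unweighted space.

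Two soft spots in your write-up deserve flagging. First, the claim that the weighted and unweighted Besov norms on $\mathfrak{X}$ are equivalent \emph{because the weight is pinched between positive constants} is not a valid inference: the Besov norm measures smoothness, and multiplication by a merely bounded function does not preserve it. The conclusion is true, but only because $\left\langle x\right\rangle ^{\vartheta /2}$ is $C^{\infty }$ with all derivatives bounded on a bounded set and hence a pointwise multiplier; that is the fact you would need to cite. Second, your worry that the $\vartheta \leq 0$ branch delivers rates in $\gamma $ rather than $\vartheta $ is a red herring that makes the last third of your argument unnecessary: since $\gamma >0$ is free and the bracket width on $\mathfrak{X}$ is of order $\delta $ uniformly in $\gamma $, one may take $\gamma $ large enough that $\gamma >s-d/p$ and land in the $\delta ^{-d/s}$ regime, which is a \emph{smaller} bound than $\delta ^{-\left( \vartheta /d+1/p\right) ^{-1}}$ whenever $\vartheta <s-d/p$ and $\vartheta /d+1/p>0$, and hence already implies the displayed conclusion; the norm-equivalence detour buys you nothing beyond what the choice of $\gamma $ gives for free.
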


The bounds on bracketing numbers obtained in Theorem
\ref{Theorem Bracketing Besov} and Corollary \ref{Corollary Bracketing Besov}
can now be applied to obtain a functional central limit theorem based on
Theorem 1 of DMR. The proof uses the tail decay properties of weighted
function spaces to establish that $\mathcal{F\subset L}_{2,\beta}\left(
P\right)  .$ This property is satisfied without further assumptions about the
marginal distribution of $\chi_{t}$ if $\vartheta>0$.

\begin{theorem}
\label{FCLT}Let $\chi_{t}$ be a strictly stationary and $\beta$-mixing
process. Assume that (\ref{Cond_beta_sum1}) holds. Assume that $1\leq
p\leq\infty$, $1\leq q\leq\infty,$ $\vartheta\in\mathbb{R}$ and $s>d/p.$
Further assume that $\mathcal{F\subset}B_{pq}^{s}\left(  \mathbb{R}%
^{d},\vartheta\right)  $ is nonempty and bounded. Assume that one of the
following conditions hold: \newline(i) $\vartheta>0,$ $\vartheta>s-d/p$ and
$s/d>1/2$; \newline(ii) $\vartheta>0,$ $\vartheta<s-d/p$ and $\vartheta
/d+1/p>1/2;$ \newline(iii) $\vartheta\leq0$ and for some $\gamma>0$ it follows
that $\left\Vert \left\langle \chi_{t}\right\rangle ^{\left(  \gamma
-\vartheta\right)  /2}\right\Vert _{2,\beta}<\infty,$ $\gamma>s-d/p$ and
$s/d>1/2;$ \newline(iv) $\vartheta\leq0$ and for some $\gamma>0$ it follows
that $\left\Vert \left\langle \chi_{t}\right\rangle ^{\left(  \gamma
-\vartheta\right)  /2}\right\Vert _{2,\beta}<\infty,$ $\gamma<s-d/p$ and
$\gamma/d+1/p>1/2.$ \newline Then, $v_{n}\left(  f\right)  \rightsquigarrow
v\left(  f\right)  $ where $v\left(  f\right)  $ is a Gaussian process with
covariance function $\Gamma$ and a.s. uniformly continuous sample paths.
\end{theorem}

Note that the conditions $1/2<s/d$ and $1/2<\gamma/d+1/p$ are the same as the
conditions given in Corollary 5 of Nickl and P\"{o}tscher (2007) for the iid
case. In the time series case these conditions need to hold in conjunction
with bounds on the $\beta$-mixing coefficients and, when $\vartheta\leq0,$ the
moment condition $\left\Vert \left\langle \chi_{t}\right\rangle ^{\left(
\gamma-\vartheta\right)  /2}\right\Vert _{2,\beta}<\infty$.

Theorem \ref{FCLT} shows that an empirical process CLT can be obtained under
the minimal Condition (\ref{Cond_beta_sum1}) if $\mathcal{F}$ is a space of
functions that asymptote to zero rapidly enough, measured by the parameter
$\vartheta>0.$ If the decay is rapid enough relative to smoothness as in case
(i) then the functional CLT holds under the minimal condition $s/d>1/2.$ Even
in case (ii) one still obtains a result with only Condition
(\ref{Cond_beta_sum1}) imposed on the dependence of the process.

When $\vartheta\leq0$ the CLT only holds under additional moment restrictions
and summability conditions for the $\beta$-mixing coefficients that are
stronger than those imposed by (\ref{Cond_beta_sum1}). The $\left\Vert
.\right\Vert _{2,\beta}$ norm provides a compact summary of these conditions
at the cost of being less easy to apply to statistical problems. It is also
harder to compare results formulated for bounds on $\left\Vert .\right\Vert
_{2,\beta}$ with results in the literature. Theorem \ref{FCLT_MB} below gives
sufficient conditions in terms of moments for $\chi_{t}$ and the summability
of mixing coefficients without directly relying on the $\left\Vert
.\right\Vert _{2,\beta}$ norm.

The results given here complement the ones in Rio (2013). If a process is
strictly stationary and $\beta$-mixing with Condition (\ref{Cond_beta_sum1})
and $f\in B_{pq}^{s}\left(  \mathbb{R}^{d},\vartheta\right)  $ with
$\vartheta>s-d/p$ then Theorem \ref{FCLT}(i) establishes a functional CLT
under the conditions that $s>d/p$ and $s/d>1/2.$ In particular, if $p=\infty,$
then the FCLT holds under the minimal condition that $\vartheta>s>0$ and
$s/d>1/2.$ This case is not covered by the results in Rio (2013). To see this
note that $B_{p_{1}\infty}^{s}\left(  \mathbb{R}^{d}\right)  \subset
B_{p_{2}\infty}^{s+d/p_{1}-d/p_{2}}\left(  \mathbb{R}^{d}\right)  $ for
$p_{1}\leq p_{2}\leq\infty$ by Triebel (1983, 2.7.1) indicating that the class
$B_{p\infty}^{s}\left(  \mathbb{R}^{d}\right)  $ for $p>2$, which is covered
by Theorem \ref{FCLT}, is a larger class than the one considered by Rio
(2013). Further, from Haroske and Triebel (1994, 2005) it follows for
$\vartheta>0,$ $\vartheta/d<1,$ $s_{1}-s_{2}>0$ and $p_{1}\left(
1-\vartheta/d\right)  <p_{2}$ that $B_{p_{1}\infty}^{s_{1}}\left(
\mathbb{R}^{d},\vartheta\right)  $ is embedded in $B_{p_{2}\infty}^{s_{2}%
}\left(  \mathbb{R}^{d}\right)  .$ For example, when $d=1$ the constraints
$s>1/2,$ $s>1/p$, $\vartheta<1$ and $p_{1}\left(  1-\vartheta\right)  <2$ must
hold for $B_{p_{1}\infty}^{s_{1}}\left(  \mathbb{R}^{d},\vartheta\right)  $ to
be embedded in $B_{p_{2}\infty}^{s_{2}}\left(  \mathbb{R}^{d}\right)  .$ Thus,
for Rio's results to encompass Theorem \ref{FCLT} one needs $p<2/\left(
1-\vartheta\right)  .$ The results of Rio (2013) then cover the spaces
$B_{p\infty}^{s}\left(  \mathbb{R}^{d},\vartheta\right)  $ for values of
$\vartheta<1$ and values of $p\leq\infty.$ However, as $\vartheta$ approaches
$0,$ the largest value $p$ can take approaches $2$ while such a constraint
does not apply to Theorem \ref{FCLT}. On the other hand, Rio (2013) covers
cases with $\vartheta=0$ and $p\leq2$ which can only be handled by Theorem
\ref{FCLT} under additional moment restrictions and stronger assumptions on
the $\beta$-mixing coefficients.

When $p=2,$ then $s/d>1/2$ and $\vartheta>s-d/2$ lead to a FCLT by means of
Theorem \ref{FCLT}. This case essentially corresponds to Rio (2013) when
$s-d/2$ is close to zero. By Triebel (1983, 2.7.1) it follows that
$B_{pq}^{s_{1}}\left(  \mathbb{R}^{d},\vartheta\right)  $ is continuously
embedded in $B_{pq}^{s_{0}}\left(  \mathbb{R}^{d},\vartheta\right)  $ for
$s_{1}\geq s_{0}.$ Thus, to apply Theorem \ref{FCLT} one can always choose $s$
small enough such that $s-d/2$ is arbitrarily small and therefore $\vartheta$
can be chosen small. If $\vartheta<s-d/p$ then Theorem \ref{FCLT}(ii) holds
under the condition that $\vartheta/d>1/2+1/p$ such that the CLT holds for $p$
sufficiently large and $s/d>1/2.$

These arguments indicate that the results in Rio (2013) are slightly sharper
for the case when $p\in\left[  1,2\right]  $ because of the requirement in
Theorem \ref{FCLT} that $\vartheta>s-d/p.$ In addition, by Triebel (1983,
2.3.2, Proposition 2), $B_{pq_{0}}^{s}\left(  \mathbb{R}^{d},\vartheta\right)
$ is continuously embedded in $B_{pq_{1}}^{s}\left(  \mathbb{R}^{d}%
,\vartheta\right)  $ for $q_{0}\leq q_{1}\leq\infty$ and $p>0$ such that
$B_{pq}^{s}\left(  \mathbb{R}^{d}\right)  $ is continuously embedded in
$Lip^{\ast}\left(  s,p,\mathbb{R}^{d}\right)  .$ This implies that the results
in Rio cover the spaces $B_{pq}^{s}\left(  \mathbb{R}^{d}\right)  $ for
$p\in\left[  1,2\right]  $ and $q\leq\infty.$

In summary, the results in Theorem \ref{FCLT} are very similar to Rio (2013)
when $p\leq2$ and the tail behavior of the function class is controlled by a
polynomial. However, the results are achieved with simpler proofs. Because of
the embedding result in Triebel (1983, 2.7.1), additional function classes are
covered by Theorem \ref{FCLT} that are not contained in Rio (2013) when $p>2.$
Theorem \ref{FCLT} also covers cases when $\vartheta\leq0$ and $p\leq\infty$
that are not covered by Rio (2013). However, in these situations somewhat
stronger assumptions than (\ref{Cond_beta_sum1}) need to be imposed on
dependence. Here the case $\vartheta=0$ and $p=\infty$ may be of particular
interest since the tail behavior of $f\left(  x\right)  $ no longer
necessarily satisfies $\lim_{\left\Vert x\right\Vert }f\left(  x\right)
\rightarrow0$ (see Proposition 3 of NP). This is one example of a case not
covered by the results in Rio (2013).

Another result that is not directly covered by Theorem \ref{FCLT} is Rio
(2013, Theorem 8.1). Rio considers the generalized Lipschitz spaces
$Lip^{\ast}\left(  s,p,\mathbb{R}^{d}\right)  $ defined in Meyer (1992). Meyer
(1992, Proposition 7, p. 200) shows that every $f\in Lip^{\ast}\left(
s,p,\mathbb{R}^{d}\right)  $ is in $\mathcal{B}_{p\infty}^{s}\left(
\mathbb{R}^{d}\right)  .$ Rio (2013, Proposition 8.1) gives an equivalent norm
$\left\Vert f\right\Vert _{ond}$ for functions $f\in Lip^{\ast}\left(
s,p,\mathbb{R}^{d}\right)  .$ Rio shows that for every strongly mixing and
stationary sequence with $\sum_{m=1}^{\infty}\alpha_{m}<\infty,$ $f\in
Lip^{\ast}\left(  s,p,\mathbb{R}^{d}\right)  $ with $p\in\left[  1,2\right]
,$ $s>d/p$ and $\left\Vert f\right\Vert _{ond}\leq a$ for some constant
$a<\infty,$ the empirical process $v_{n}\left(  f\right)  $ satisfies a
stochastic equicontinuity condition and thus a functional central limit
theorem. Rio (2013, Theorem 8.1) is not covered by the theory in this paper
because the concept of strongly mixing sequences is slightly weaker than
$\beta$-mixing.

An immediate corollary to Theorem \ref{FCLT} obtains for the case where
$\chi_{t}$ takes values in a bounded set $\mathfrak{X\subset}\mathbb{R}^{d}$.

\begin{corollary}
\label{Corollary FCLT}Let $\chi_{t}$ be strictly stationary and $\beta
$-mixing. Assume that $P\left(  \chi_{t}\in\mathfrak{X}\right)  =1$ for a
bounded Borel set $\mathfrak{X\subset}\mathbb{R}^{d}$ and there exists a
finite $M$ with $\left\langle x\right\rangle \leq M$ for all $x\in
\mathfrak{X.}$ Assume that (\ref{Cond_beta_sum1}) holds. Assume that $1\leq
p\leq\infty$, $1\leq q\leq\infty,$ $\vartheta\in\mathbb{R}$ and for
$s,d<\infty$ and $s>d/p.$ Further assume that $\mathcal{F\subset}B_{pq}%
^{s}\left(  \mathfrak{X}\right)  $ is nonempty and bounded. Assume that
$s/d>1/2$. Then, $v_{n}\left(  f\right)  \rightsquigarrow v\left(  f\right)  $
where $v\left(  f\right)  $ is a Gaussian process with covariance function
$\Gamma$ and a.s. uniformly continuous sample paths.
\end{corollary}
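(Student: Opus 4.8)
The plan is to derive Corollary \ref{Corollary FCLT} as a specialization of Theorem \ref{FCLT}, specifically of case (i), by exploiting the fact that the bounded domain $\mathfrak{X}$ collapses the role of the weight $\langle x\rangle^{\vartheta/2}$. First I would observe that the hypothesis $P(\chi_t\in\mathfrak{X})=1$ together with $\langle x\rangle\leq M$ on $\mathfrak{X}$ means that for any $f\in B^s_{pq}(\mathfrak{X})$, the function is bounded (by Nickl and P\"{o}tscher (2007, Proposition 3), since $s>d/p$), and therefore $\|f\|_{2,\beta}<\infty$ automatically because
\begin{equation*}
\|f\|_{2,\beta}^2=\int_0^1\beta^{-1}(u)\,(Q_f(u))^2\,du\leq \|f\|_\infty^2\int_0^1\beta^{-1}(u)\,du\leq \|f\|_\infty^2\,\tsum_{m=0}^\infty\beta_m<\infty,
\end{equation*}
where the last step uses $\int_0^1\beta^{-1}(u)\,du=\sum_{m=0}^\infty\beta_m$ and Condition (\ref{Cond_beta_sum1}). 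This establishes $\mathcal{F}\subset\mathcal{L}_{2,\beta}(P)$ without any tail-decay or moment restriction, which is exactly the step that the weight parameter $\vartheta>0$ or the moment bound on $\langle\chi_t\rangle^{(\gamma-\vartheta)/2}$ is needed for in the unbounded case.

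Next I would invoke Corollary \ref{Corollary Bracketing Besov}, which applies precisely to the bounded-domain setting. Since $\mathfrak{X}$ is bounded with $\langle x\rangle\leq M$, one may regard $\mathcal{F}\subset B^s_{pq}(\mathfrak{X})$ as embedded in a weighted space $B^s_{pq}(\mathfrak{X},\vartheta)$ for a choice of $\vartheta$ satisfying $\vartheta>s-d/p$ — on a bounded set the weight $\langle x\rangle^{\vartheta/2}$ is bounded above and below by positive constants, so the weighted and unweighted norms are equivalent and $\mathcal{F}$ remains bounded in either. Corollary \ref{Corollary Bracketing Besov} then yields the bracketing entropy bound in the regime $\vartheta>s-d/p$, namely
\begin{equation*}
H_{[]}\left(\delta,\mathcal{F},\left\Vert{}\right\Vert_{2,\beta}\right)\precsim\delta^{-d/s}.
\end{equation*}

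It then remains to verify the integrability condition (\ref{Bracketing}) that feeds into Theorem 1 of DMR. With $H_{[]}\precsim\delta^{-d/s}$ one has $\sqrt{H_{[]}}\precsim\delta^{-d/(2s)}$, and the integral $\int_0^1\delta^{-d/(2s)}\,d\delta$ converges if and only if $d/(2s)<1$, i.e. $s/d>1/2$, which is exactly the hypothesis imposed. Having established both $\mathcal{F}\subset\mathcal{L}_{2,\beta}(P)$ and (\ref{Bracketing}), the conclusion $v_n(f)\rightsquigarrow v(f)$ with the stated Gaussian limit follows directly by applying Theorem 1 of DMR (equivalently, by invoking Theorem \ref{FCLT}(i) with any admissible $\vartheta>\max(0,s-d/p)$). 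I do not expect a genuine obstacle here, since the bounded domain removes the delicate tail-behavior analysis; the only point requiring care is the bookkeeping that justifies treating the unweighted space $B^s_{pq}(\mathfrak{X})$ as a weighted space with $\vartheta$ large enough to land in the favorable entropy regime, so that the argument reduces cleanly to case (i) rather than requiring the more restrictive condition $\gamma/d+1/p>1/2$ of case (iv).
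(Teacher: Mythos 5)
Your proposal is correct and follows essentially the same route as the paper's proof: fix $\vartheta>\max(0,s-d/p)$, use boundedness of $\langle x\rangle$ on $\mathfrak{X}$ to place $\mathcal{F}$ in $B_{pq}^{s}(\mathfrak{X},\vartheta)$ and to verify $\mathcal{F}\subset\mathcal{L}_{2,\beta}(P)$ directly from Condition (\ref{Cond_beta_sum1}), then combine the entropy bound of Corollary \ref{Corollary Bracketing Besov} in the regime $\vartheta>s-d/p$ with $s/d>1/2$ to check (\ref{Bracketing}) and invoke Theorem 1 of DMR. The only cosmetic difference is that you bound $\|f\|_{2,\beta}$ via the quantile-function representation while the paper uses the $\sup_{b\in\mathcal{L}(\beta)}$ bound in (\ref{NP_Prop3}); these are the same computation.
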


Corollary \ref{Corollary FCLT} show that for smooth function classes
restricted to a bounded set a functional CLT holds under the minimal
dependence condition (\ref{Cond_beta_sum1}).

When the asymptotic behavior of $f$ as $\left\Vert x\right\Vert \rightarrow
\infty$ is proportional to $\left\langle \chi_{t}\right\rangle ^{-\vartheta
/2}$ and $\vartheta\leq0,$ then more restrictive conditions on the dependence
need to be imposed. This happens implicitly through the condition
\begin{equation}
\left\Vert \left\langle \chi_{t}\right\rangle ^{\left(  \gamma-\vartheta
\right)  /2}\right\Vert _{2,\beta}<\infty\label{Moment_Bound}%
\end{equation}
which must hold for some $\gamma>0.$ The advantage of this condition is that
it only involves the marginal distribution of $\chi_{t}$ and not the
properties of the functional class, other than through the parameter
$\vartheta$. Results in DMR can be used to give simple sufficient conditions
for \ref{Moment_Bound}. Under additional assumptions about the order of
$\beta_{m}$ and moment restrictions on the marginal distribution of
$\left\Vert \chi_{t}\right\Vert ^{2}$ the following result can be given for
the case when $\vartheta\leq0,$ i.e. when $\lim_{\left\Vert x\right\Vert
}f\left(  x\right)  \rightarrow0$ does not necessarily hold.

\begin{theorem}
\label{FCLT_MB}Let $\chi_{t}$ be strictly stationary and $\beta$-mixing.
Assume that for some $r>1,$ $\sum_{m=1}^{\infty}m^{1/\left(  r-1\right)
}\beta_{m}<\infty$ holds. Assume that $1\leq p\leq\infty$, $1\leq q\leq
\infty,$ $\vartheta\in\mathbb{R}$, $\vartheta\leq0$ and $s>d/p.$ Further
assume that $\mathcal{F\subset}B_{pq}^{s}\left(  \mathbb{R}^{d},\vartheta
\right)  $ is nonempty and bounded. Assume that for some $\gamma>0$ such that
$r\left(  \gamma-\vartheta\right)  >1$ it holds that $E\left[  \left\Vert
\chi_{t}\right\Vert ^{2r\left(  \gamma-\vartheta\right)  }\right]  <\infty$
and that either \newline(i) $\gamma>s-d/p$ and $s/d>1/2$ or \newline(ii)
$\gamma<s-d/p$ and $\gamma/d+1/p>1/2.$\newline Then, $v_{n}\left(  f\right)
\rightsquigarrow v\left(  f\right)  $ where $v\left(  f\right)  $ is a
Gaussian process with covariance function $\Gamma$ and a.s. uniformly
continuous sample paths.
\end{theorem}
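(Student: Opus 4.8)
The plan is to reduce the statement to Theorem \ref{FCLT}, whose cases (iii) and (iv) already deliver the functional CLT when $\vartheta\le 0$, \emph{provided} the abstract condition (\ref{Moment_Bound}) holds. Indeed, the bracketing restrictions ``$\gamma>s-d/p$ with $s/d>1/2$'' and ``$\gamma<s-d/p$ with $\gamma/d+1/p>1/2$'' appearing here as (i) and (ii) are exactly those of cases (iii) and (iv) of Theorem \ref{FCLT}, so they transfer verbatim. It therefore suffices to (a) check that the strengthened mixing hypothesis implies (\ref{Cond_beta_sum1}) and (b) deduce (\ref{Moment_Bound}) from $E[\|\chi_t\|^{2r(\gamma-\vartheta)}]<\infty$ together with $\sum_m m^{1/(r-1)}\beta_m<\infty$. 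Step (a) is immediate: since $r>1$ we have $m^{1/(r-1)}\ge 1$ for $m\ge 1$, so $\sum_m\beta_m\le\beta_0+\sum_{m\ge 1}m^{1/(r-1)}\beta_m<\infty$.

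The core of the argument is step (b): showing $\|\langle\chi_t\rangle^{(\gamma-\vartheta)/2}\|_{2,\beta}<\infty$. Writing $g(x)=\langle x\rangle^{(\gamma-\vartheta)/2}$ and recalling $\|g\|_{2,\beta}^2=\int_0^1\beta^{-1}(u)Q_g(u)^2\,du$, I would apply H\"older's inequality with the conjugate exponents $r/(r-1)$ and $r$:
\[
\|g\|_{2,\beta}^2\le\Big(\int_0^1\big(\beta^{-1}(u)\big)^{r/(r-1)}du\Big)^{(r-1)/r}\Big(\int_0^1 Q_g(u)^{2r}du\Big)^{1/r}.
\]
The second factor equals $E[g(\chi_t)^{2r}]=E[\langle\chi_t\rangle^{r(\gamma-\vartheta)}]$, which is finite because $(1+\|\chi_t\|^2)^{r(\gamma-\vartheta)}\le C(1+\|\chi_t\|^{2r(\gamma-\vartheta)})$ and $E[\|\chi_t\|^{2r(\gamma-\vartheta)}]<\infty$ by hypothesis; the requirement $r(\gamma-\vartheta)>1$ makes this a genuine moment of order $2r(\gamma-\vartheta)>2$, and since $r>1$ the function $g$ lies in $\mathcal{L}_{2r}(P)$ with $2r>2$, which is precisely the integrability the H\"older step needs.

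It remains to control the first factor through the mixing rate. The key identity is that, for every $t\ge 0$, one has $\lambda\{u\in(0,1):\beta^{-1}(u)>t\}=\beta_{\lfloor t\rfloor}$, which follows from $\beta^{-1}(u)>t\iff\beta_{\lfloor t\rfloor}>u$ and the monotonicity of $m\mapsto\beta_m$. Fubini's theorem then gives, for $a=r/(r-1)$,
\[
\int_0^1\big(\beta^{-1}(u)\big)^{a}du=a\int_0^{\infty}t^{a-1}\beta_{\lfloor t\rfloor}\,dt=\sum_{m=0}^{\infty}\beta_m\big((m+1)^{a}-m^{a}\big).
\]
Since $(m+1)^a-m^a\asymp a\,m^{a-1}$ and $a-1=1/(r-1)$, the right-hand side is comparable to $\beta_0+\sum_{m\ge 1}m^{1/(r-1)}\beta_m$, finite by hypothesis. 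Combining the two factors yields $\|g\|_{2,\beta}<\infty$, i.e.\ (\ref{Moment_Bound}).

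With (\ref{Cond_beta_sum1}) and (\ref{Moment_Bound}) established and the bracketing conditions matching those of Theorem \ref{FCLT}, the conclusion $v_n(f)\rightsquigarrow v(f)$ follows on applying case (iii) (when $\gamma>s-d/p$) or case (iv) (when $\gamma<s-d/p$) of that theorem. I expect the main obstacle to lie in the careful execution of step (b): one must justify the quantile identity $\int_0^1 Q_g(u)^{2r}du=E[g(\chi_t)^{2r}]$ and the H\"older split for the \emph{unbounded} envelope $g$ (recall $\vartheta\le 0$ allows growth like $\langle x\rangle^{-\vartheta/2}$), and verify the $\beta$-integral identity rigorously, including that $a-1=1/(r-1)$ aligns the $\beta$-sum with the assumed mixing rate. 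The remaining passage from (\ref{Moment_Bound}) to $\mathcal{F}\subset\mathcal{L}_{2,\beta}(P)$ is internal to Theorem \ref{FCLT} and need not be repeated here.
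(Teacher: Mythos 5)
Your proof is correct, and its overall architecture coincides with the paper's: both reduce the theorem to cases (iii)/(iv) of Theorem \ref{FCLT} by verifying that the strengthened mixing rate implies (\ref{Cond_beta_sum1}) and that the $2r(\gamma-\vartheta)$-moment bound implies (\ref{Moment_Bound}), with the same elementary estimate $E[\langle\chi_t\rangle^{r(\gamma-\vartheta)}]\le C\left(1+E[\|\chi_t\|^{2r(\gamma-\vartheta)}]\right)$ handling the passage from the Euclidean norm to the weight $\langle\cdot\rangle$. The one substantive difference is how the key implication ``$\sum_m m^{1/(r-1)}\beta_m<\infty$ together with $\|\langle\chi_t\rangle^{(\gamma-\vartheta)/2}\|_{2r,P}<\infty$ implies $\|\langle\chi_t\rangle^{(\gamma-\vartheta)/2}\|_{2,\beta}<\infty$'' is obtained: the paper simply cites DMR's Lemma 2 (with the Young function $\phi(x)=x^r$), whereas you prove it from scratch by applying H\"older with exponents $r/(r-1)$ and $r$ to $\int_0^1\beta^{-1}(u)Q_g(u)^2\,du$, identifying $\int_0^1 Q_g(u)^{2r}du$ with $E[g(\chi_t)^{2r}]$ via the equidistribution of $Q_g(U)$ and $|g(\chi_t)|$, and evaluating $\int_0^1(\beta^{-1}(u))^{r/(r-1)}du=\sum_{m\ge 0}\beta_m\left((m+1)^{r/(r-1)}-m^{r/(r-1)}\right)\asymp \beta_0+\sum_{m\ge1}m^{1/(r-1)}\beta_m$ by the layer-cake identity $\lambda\{u:\beta^{-1}(u)>t\}=\beta_{\lfloor t\rfloor}$. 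All of these steps check out (your quantile and Fubini identities are the standard ones, and the exponent bookkeeping $a-1=1/(r-1)$ is right), so what you have is essentially a self-contained re-derivation of the special case of DMR's Lemma 2 that the paper invokes; this buys transparency about exactly where the rate $m^{1/(r-1)}$ comes from, at the cost of a page of computation the paper outsources. The only cosmetic quibble is that the hypothesis $r(\gamma-\vartheta)>1$ plays no role in your H\"older step itself (it is inherited from the theorem statement, not needed to make $g\in\mathcal{L}_{2r}(P)$ a ``genuine'' moment), so you should not present it as what licenses that step.
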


The form of the last theorem is particularly useful when a comparison with
other results in the literature is desired, since those results are often
presented in terms of separate moment bounds and size restrictions on mixing coefficients.

More generally, the results show that in weighted Besov spaces control over
tail behavior of the function class can be utilized to give sufficient
conditions for a CLT that directly involves the marginal distribution of
$\chi_{t}$ rather than that of $f\left(  \chi_{t}\right)  .$ This is possible
because the asymptotic behavior of $f\left(  \chi_{t}\right)  $ is controlled
by terms that are functions of $\left\Vert \chi_{t}\right\Vert .$ The next
corollary gives explicit versions of the previous general results for Sobolev,
H\"{o}lder and Lipschitz classes of functions.

The following Corollary is a special case of Theorem \ref{FCLT}. The proof
follows in the same way as the proofs of similar corollaries in Nickl and
P\"{o}tscher (2007) by arguing that bounded subsets of $H_{p}^{s}\left(
\mathbb{R}^{d},\vartheta\right)  $ are also bounded subsets of $B_{p\infty
}^{s}\left(  \mathbb{R}^{d},\vartheta\right)  .$

\begin{corollary}
Let $\chi_{t}$ be a strictly stationary and $\beta$-mixing process. Assume
that (\ref{Cond_beta_sum1}) holds. Assume that $1<p\leq\infty$, $\vartheta
\in\mathbb{R}$ and $s>d/p.$ Further assume that $\mathcal{F\subset}H_{p}%
^{s}\left(  \mathbb{R}^{d},\vartheta\right)  $ is nonempty and bounded. Assume
that one of the following conditions hold: \newline(i) $\vartheta>0,$
$\vartheta>s-d/p$ and $s/d>1/2$; \newline(ii) $\vartheta>0,$ $\vartheta<s-d/p$
and $\vartheta/d+1/p>1/2;$ \newline(iii) $\vartheta\leq0$ and for some
$\gamma>0$ it follows that $\left\Vert \left\langle \chi_{t}\right\rangle
^{\left(  \gamma-\vartheta\right)  /2}\right\Vert _{2,\beta}<\infty,$
$\gamma>s-d/p$ and $s/d>1/2;$ \newline(iv) $\vartheta\leq0$ and for some
$\gamma>0$ it follows that $\left\Vert \left\langle \chi_{t}\right\rangle
^{\left(  \gamma-\vartheta\right)  /2}\right\Vert _{2,\beta}<\infty,$
$\gamma<s-d/p$ and $\gamma/d+1/p>1/2.$ \newline Then, $v_{n}\left(  f\right)
\rightsquigarrow v\left(  f\right)  $ where $v\left(  f\right)  $ is a
Gaussian process with covariance function $\Gamma$ and a.s. uniformly
continuous sample paths.
\end{corollary}

The following corollary again considers the special case where the domain of
the function space is a bounded subset of $\mathbb{R}^{d}.$

\begin{corollary}
\label{Corollary_Besov_Bounded}Let $\chi_{t}$ be a strictly stationary and
$\beta$-mixing process. Assume that $P\left(  \chi_{t}\in\mathfrak{X}\right)
=1$ where $\mathfrak{X\subset}\mathbb{R}^{d}$ and there exists a finite $M$
with $\left\langle x\right\rangle \leq M$ for all $x\in\mathfrak{X.}$ Assume
that (\ref{Cond_beta_sum1}) holds. Assume that $1<p\leq\infty$, $\vartheta
\in\mathbb{R}$ and $s,d<\infty$ with $s>d/p.$ Further assume that
$\mathcal{F\subset}H_{p}^{s}\left(  \mathfrak{X,}\vartheta\right)  $ is
nonempty and bounded. Assume that $s/d>1/2$. Then, $v_{n}\left(  f\right)
\rightsquigarrow v\left(  f\right)  $ where $v\left(  f\right)  $ is a
Gaussian process with covariance function $\Gamma$ and a.s. uniformly
continuous sample paths.
\end{corollary}

Andrews (1991) considers the space $\mathcal{H}_{p}^{s}\left(  \mathfrak{X}%
\right)  $ where $\mathfrak{X}$ is a bounded subset of $\mathbb{R}^{d}.$ He
allows for heterogeneous near epoch dependent processes which include as
special cases strong mixing stationary sequences. Since $\beta$-mixing implies
strong mixing the results of this paper are obtained under somewhat stronger
assumptions as far as the mixing concept and stationarity requirements are
concerned. On the other hand, no boundedness of $\mathfrak{X}$ is required.
Andrews (1991, p.199) discusses some ways of relaxing the boundedness
assumption regarding the support but does not provide a general treatment.
Moreover, as pointed out by Nickl and P\"{o}tscher (2007, p. 179 and p. 196)
it follows for $f\in$ $H_{p}^{s}\left(  \mathbb{R}^{d}\right)  $,
$\lim_{\left\Vert x\right\Vert }f\left(  x\right)  \rightarrow0$ while this is
not necessarily the case for $f\in H_{p}^{s}\left(  \mathbb{R}^{d}%
,\vartheta\right)  $ and $\vartheta<0.$

Andrews (1991, Theorem 4 and Comment 1) obtains a functional central limit
theorem for strong mixing processes of size $-2,$ $f\in\mathcal{H}_{2}%
^{s}\left(  \mathfrak{X}\right)  $ and $s/d>1/2.$ Corollary
\ref{Corollary_Besov_Bounded} shows that, at least under the additional
assumption of stationarity and $\beta$-mixing but only satisfying
(\ref{Cond_beta_sum1}), this result can be obtained for all functions in
$H_{p}^{s}\left(  \mathfrak{X}\right)  $ with $s/d>1/2$. Note that a $\beta
$-mixing process that satisfies Condition (\ref{Cond_beta_sum1}) also is
$\alpha$-mixing with $\sum_{m=1}^{\infty}\alpha_{m}<\infty$ but is not
necessarily $\alpha$-mixing of size $-2.$ In this sense, the conditions given
here are complementary to Andrews (1991).

The following corollaries specialize previous results to H\"{o}lder spaces.

\begin{corollary}
Let $\chi_{t}$ be strictly stationary and $\beta$-mixing. Assume that
(\ref{Cond_beta_sum1}) holds. Assume that $\vartheta\in\mathbb{R}$ and
$s>d/2.$ Further assume that $\mathcal{F\subset}C^{s}\left(  \mathbb{R}%
^{d},\vartheta\right)  $ is nonempty and bounded. Assume that one of the
following conditions hold: \newline(i) $\vartheta>0,$ $\vartheta>s$ and
$s/d>1/2$; \newline(ii) $\vartheta>0,$ $\vartheta<s$ and $\vartheta/d>1/2;$
\newline(iii) $\vartheta\leq0$ and for some $\gamma>0$ it follows that
$\left\Vert \left\langle \chi_{t}\right\rangle ^{\left(  \gamma-\vartheta
\right)  /2}\right\Vert _{2,\beta}<\infty,$ $\gamma>s$ and $s/d>1/2;$
\newline(iv) $\vartheta\leq0$ and for some $\gamma>0$ it follows that
$\left\Vert \left\langle \chi_{t}\right\rangle ^{\left(  \gamma-\vartheta
\right)  /2}\right\Vert _{2,\beta}<\infty,$ $\gamma<s$ and $\gamma/d>1/2.$
\newline Then, $v_{n}\left(  f\right)  \rightsquigarrow v\left(  f\right)  $
where $v\left(  f\right)  $ is a Gaussian process with covariance function
$\Gamma$ and a.s. uniformly continuous sample paths.
\end{corollary}

The proof follows again from noting that $\mathcal{F}$ is a bounded subset in
$B_{\infty\infty}^{s}\left(  \mathbb{R}^{d},\vartheta\right)  ,$ see Nickl and
P\"{o}tscher (2007, p. 188). As before, additional results for the cases of
bounded support can be stated as follows.

\begin{corollary}
\label{Corollary_Holder_Bounded}Let $\chi_{t}$ be a strictly stationary and
$\beta$-mixing. Assume that $P\left(  \chi_{t}\in\mathfrak{X}\right)  =1$
where $\mathfrak{X\subset}\mathbb{R}^{d}$ and there exists a finite $M$ with
$\left\langle x\right\rangle \leq M$ for all $x\in\mathfrak{X.}$ Assume that
(\ref{Cond_beta_sum1}) holds. Assume that $\vartheta\in\mathbb{R}$ ,
$s,d<\infty$ and $s>0.$ Further assume that $\mathcal{F\subset}C^{s}\left(
\mathfrak{X},\vartheta\right)  $ is nonempty and bounded. Assume that
$s/d>1/2$. Then, $v_{n}\left(  f\right)  \rightsquigarrow v\left(  f\right)  $
where $v\left(  f\right)  $ is a Gaussian process with covariance function
$\Gamma$ and a.s. uniformly continuous sample paths.
\end{corollary}

Andrews (1991, Comment 3) also considers the case of strong mixing processes
of size $-2$ and Lipschitz function classes. More specifically, when
$\mathfrak{X}$ is a bounded interval on $\mathbb{R}$, a functional central
limit theorem holds for functions $f$ such that $\left\vert f\left(  x\right)
-f\left(  y\right)  \right\vert \leq K\left\vert x-y\right\vert ^{s}$ with
$s\in(1/2,1].$ By Adams and Fournier (2003, Theorem 1.34) the function class
$C^{s}\left(  \mathfrak{X}\right)  $ with $s\in(1/2,1)$ contains the Lipschitz
functions with $s\in(1/2,1)$. Then, Corollary \ref{Corollary_Holder_Bounded}
can be used to establish a functional central limit theorem for Lipschitz
functions and for stationary $\beta$-mixing processes that satisfy Condition
(\ref{Cond_beta_sum1}). Note that when $s\in(1/2,1)$ and $\mathfrak{X}$ is a
bounded interval, it follows that for $d=1$ the condition $s/d>1/2$ is satisfied.

When $\vartheta\leq0$ such that $\lim_{\left\Vert x\right\Vert }f\left(
x\right)  \rightarrow0$ does not hold, a more specific result can be given for
functions in $C^{s}\left(  \mathbb{R}^{d},\vartheta\right)  $ as long as one
is willing to impose additional conditions on the rate of decay of $\beta_{m}%
$. This is done in the following corollary.

\begin{corollary}
\label{Holder_Unbounded_Moment}Let $\chi_{t}$ be strictly stationary and
$\beta$-mixing. Assume that for some $r>1,$ $\sum_{m=1}^{\infty}m^{1/\left(
r-1\right)  }\beta_{m}<\infty$ holds. Assume that $\vartheta\in\mathbb{R}$,
$\vartheta\leq0$ and $s>0.$ Further assume that $\mathcal{F\subset}%
C^{s}\left(  \mathbb{R}^{d},\vartheta\right)  $ is nonempty and bounded.
Assume that for some $\gamma>0$ such that $r\left(  \gamma-\vartheta\right)
>1$ it holds that that $E\left[  \left\Vert \chi_{t}\right\Vert ^{2r\left(
\gamma-\vartheta\right)  }\right]  <\infty$ and that either \newline(i)
$\gamma>s$ and $s/d>1/2$ or \newline(ii) $\gamma<s$ and $\gamma/d>1/2.$
\newline Then, $v_{n}\left(  f\right)  \rightsquigarrow v\left(  f\right)  $
where $v\left(  f\right)  $ is a Gaussian process with covariance function
$\Gamma$ and a.s. uniformly continuous sample paths.
\end{corollary}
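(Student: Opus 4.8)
The plan is to obtain this corollary as a direct specialization of Theorem \ref{FCLT_MB} to the choice $p=q=\infty$, after embedding the weighted H\"{o}lder space into the corresponding weighted Besov space. The key structural fact is that, once $p=q=\infty$ is imposed, every parameter restriction in Theorem \ref{FCLT_MB} collapses exactly to the hypotheses stated here. Consequently the only genuine work is to show that a nonempty bounded subset $\mathcal{F}$ of $C^{s}\left( \mathbb{R}^{d},\vartheta \right) $ is also a bounded subset of $B_{\infty \infty }^{s}\left( \mathbb{R}^{d},\vartheta \right) $.

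First I would establish this embedding. Because $C^{s}\left( \mathbb{R}^{d},\vartheta \right) $ is defined only for non-integer $s>0$, I may invoke the identity $\mathcal{C}^{s}\left( \mathbb{R}^{d},\vartheta \right) =C^{s}\left( \mathbb{R}^{d},\vartheta \right) $ from Triebel (1992, p.5), which equates the weighted H\"{o}lder space with the weighted Zygmund space. By inspection the Zygmund norm $\left\Vert \cdot \right\Vert _{s,\infty }^{z}$ is identical to the Besov $\ast$-norm $\left\Vert \cdot \right\Vert _{s,\infty ,\infty ,\lambda }^{\ast }$ evaluated at $p=q=\infty$, since $\left\Vert \cdot \right\Vert _{\infty ,\lambda }=\left\Vert \cdot \right\Vert _{\infty }$ and the integer-order and second-difference terms coincide. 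Applying this to $g=f\left( \cdot \right) \left\langle x\right\rangle ^{\vartheta /2}$ and using the equivalence of $\left\Vert \cdot \right\Vert _{s,p,q,\lambda }^{\ast }$ and the Fourier-based norm $\left\Vert \cdot \right\Vert _{s,p,q,\lambda }$ on $\mathcal{B}_{pq}^{s}\left( \mathbb{R}^{d}\right) $, a uniform bound on $\left\Vert f\left( \cdot \right) \left\langle x\right\rangle ^{\vartheta /2}\right\Vert _{s,\infty }$ over $\mathcal{F}$ yields a uniform bound on $\left\Vert f\left( \cdot \right) \left\langle x\right\rangle ^{\vartheta /2}\right\Vert _{s,\infty ,\infty ,\lambda }$, which is precisely boundedness of $\mathcal{F}$ in $B_{\infty \infty }^{s}\left( \mathbb{R}^{d},\vartheta \right) $; this is the argument recorded in Nickl and P\"{o}tscher (2007, p. 188). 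Continuity of the weighted functions places $\mathcal{F}$ in $B_{\infty \infty }^{s}$ rather than merely $\mathcal{B}_{\infty \infty }^{s}$.

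With the embedding in hand I would apply Theorem \ref{FCLT_MB} with $p=q=\infty$. The mixing condition $\sum_{m=1}^{\infty }m^{1/\left( r-1\right) }\beta _{m}<\infty$, the restriction $\vartheta \leq 0$, the existence of $\gamma >0$ with $r\left( \gamma -\vartheta \right) >1$, and the moment bound $E\left[ \left\Vert \chi _{t}\right\Vert ^{2r\left( \gamma -\vartheta \right) }\right] <\infty$ transfer verbatim. Setting $p=\infty$ gives $d/p=0$ and $1/p=0$, so that $s>d/p$ becomes $s>0$, case (i) of Theorem \ref{FCLT_MB} (namely $\gamma >s-d/p$ and $s/d>1/2$) becomes $\gamma >s$ and $s/d>1/2$, and case (ii) ($\gamma <s-d/p$ and $\gamma /d+1/p>1/2$) becomes $\gamma <s$ and $\gamma /d>1/2$, which are exactly conditions (i) and (ii) of the present corollary. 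Theorem \ref{FCLT_MB} then delivers $v_{n}\left( f\right) \rightsquigarrow v\left( f\right) $ with the stated Gaussian limit. The main obstacle is not analytic but one of bookkeeping: one must confirm that the second-order difference in the Zygmund/Besov norm and the first-order H\"{o}lder modulus generate the same space for non-integer $s$ (guaranteed by the cited Triebel equivalence, and the reason integer $s$ is excluded), and that the weight $\left\langle x\right\rangle ^{\vartheta /2}$ is carried through the norm equivalence uniformly over $\mathcal{F}$. Once these points are settled the corollary is a pure specialization of Theorem \ref{FCLT_MB}.
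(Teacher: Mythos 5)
Your proposal is correct and follows essentially the same route the paper intends: the paper's own justification for the H\"{o}lder corollaries is precisely the observation that a bounded subset of $C^{s}\left( \mathbb{R}^{d},\vartheta \right) $ is a bounded subset of $B_{\infty \infty }^{s}\left( \mathbb{R}^{d},\vartheta \right) $ (via the Zygmund-space identification for non-integer $s$, citing Nickl and P\"{o}tscher 2007, p.~188), after which Theorem \ref{FCLT_MB} is applied with $p=q=\infty $ so that $d/p=1/p=0$ and the conditions reduce to those stated. Your additional bookkeeping on the weight and the continuity requirement is accurate and matches the paper's (terser) treatment.
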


Corollary \ref{Holder_Unbounded_Moment} should only be applied to cases where
$\vartheta\leq0.$ As for previous results, when $\vartheta>0$, the functional
central limit theorem can be established under weaker assumptions.

The results in DMR are stated in general terms and form the basis for what is
derived here. Nevertheless, on p.403-405 DMR provide a number of different
approaches that can be used to replace high level assumptions with more
primitive conditions. These methods do not lead to the sharpest possible
results as far as conditions on $\beta_{m}$ are concerned for the classes of
functions considered by Rio (2013). For functions whose tail decay is well
controlled by a polynomial or for functions that are restricted to a bounded
domain Theorem \ref{FCLT} also delivers sharper results. In particular,
Theorem \ref{FCLT} shows that $\vartheta>0,$ i.e. when tail behavior is
controlled by polynomials, the functional CLT can be obtained without
requiring the additional moment bound in (\ref{Moment_Bound}). As a result,
neither the marginal distribution of $\chi_{t}$ nor the dependence of the
process need further restrictions beyond Condition (\ref{Cond_beta_sum1}). On
the other hand, the results in DMR lead to similar conditions as the ones
given in Theorem \ref{FCLT_MB} for spaces where $\vartheta\leq0.$ The
following result illustrates this. By exploiting condition (2.11) in DMR and
applying Theorem 1 in Nickl and P\"{o}tscher (2007) one obtains the following.

\begin{theorem}
\label{FCLT2}Let $1\leq p\leq\infty,$ $1\leq q\leq\infty,$ $\vartheta
\in\mathbb{R}$ and $s-d/p>0.$ For $1<r<\infty$ let $\chi_{t}$ be a strictly
stationary, absolutely regular process such that $\sum_{m=1}^{\infty
}m^{1/\left(  r-1\right)  }\beta_{m}<\infty.$ Assume that for some $\gamma>0$
such that $r\left(  \gamma-\vartheta\right)  >1$ the moment bound
\begin{equation}
\left\Vert \left\langle \chi_{t}\right\rangle ^{\left(  \gamma-\vartheta
\right)  /2}\right\Vert _{2r,P}<\infty\label{Moment Cond FCLT2}%
\end{equation}
holds. Let $\mathcal{F}$ be a bounded subset of $B_{pq}^{s}\left(
\mathbb{R}^{d},\vartheta\right)  .$ Furthermore one of the conditions holds:
\newline i) $\gamma>s-d/p$ and $1/2<s/d$ \newline ii) $\gamma<s-d/p$ and
$1/2<\gamma/d+1/p.$ \newline Then $v_{n}\left(  f\right)  \rightsquigarrow
v\left(  f\right)  $ where $v\left(  f\right)  $ is a Gaussian process with
covariance function $\Gamma$ and a.s. uniformly continuous sample paths.
\end{theorem}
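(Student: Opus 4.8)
The plan is to reduce everything to an application of Theorem 1 of DMR, whose hypotheses are strict stationarity, $\beta$-mixing with (\ref{Cond_beta_sum1}), the inclusion $\mathcal{F} \subset \mathcal{L}_{2,\beta}(P)$, and the bracketing integral (\ref{Bracketing}); once these are in place the conclusion $v_n(f) \rightsquigarrow v(f)$, with the stated Gaussian limit, is exactly what that theorem delivers. The hypothesis $\sum_{m=1}^\infty m^{1/(r-1)}\beta_m < \infty$ with $r>1$ gives $\sum_{m=1}^\infty \beta_m \leq \sum_{m=1}^\infty m^{1/(r-1)}\beta_m < \infty$, and adding $\beta_0 \leq 1$ yields (\ref{Cond_beta_sum1}), so the mixing requirement is immediate. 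The only substantive work is the entropy integral and the membership, and the device that handles both is a comparison between $\left\Vert \cdot\right\Vert_{2,\beta}$ and the ordinary $L_{2r}(P)$ norm, which lets me import the $\left\Vert \cdot\right\Vert_{2r,P}$ bracketing bounds of Nickl and P\"{o}tscher (2007).

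First I would invoke condition (2.11) of DMR, which under $\sum_{m=1}^\infty m^{1/(r-1)}\beta_m < \infty$ supplies a finite constant $C$, depending only on $r$ and the mixing sequence, with $\left\Vert g\right\Vert_{2,\beta} \leq C\left\Vert g\right\Vert_{2r,P}$ for every $g\in\mathcal{L}_{2r}(P)$; this is a H\"{o}lder bound applied to $\int_0^1 \beta^{-1}(u)Q_g(u)^2\,du$ with conjugate exponents $r/(r-1)$ and $r$, the integrability of $\beta^{-1}(u)^{r/(r-1)}$ being equivalent to the assumed summability. This comparison is bracket-preserving: a pointwise bracket $[l,u]$ with $\left\Vert u-l\right\Vert_{2r,P}\leq \delta/C$ automatically has $\left\Vert u-l\right\Vert_{2,\beta}\leq\delta$, and its endpoints lie in $\mathcal{L}_{2,\beta}(P)$, so $H_{[]}(\delta,\mathcal{F},\left\Vert \cdot\right\Vert_{2,\beta}) \leq H_{[]}(\delta/C,\mathcal{F},\left\Vert \cdot\right\Vert_{2r,P})$. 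Applying Theorem 1 of Nickl and P\"{o}tscher (2007) to the bounded subset $\mathcal{F}$ of $B_{pq}^s(\mathbb{R}^d,\vartheta)$ in the $L_{2r}(P)$ norm -- this is precisely where the moment condition (\ref{Moment Cond FCLT2}), posed at index $2r$, enters as the hypothesis their theorem requires -- bounds the right-hand side by $\delta^{-d/s}$ in case (i) and by $\delta^{-(\gamma/d+1/p)^{-1}}$ in case (ii), the constant $C$ being absorbed into $\precsim$. The integral (\ref{Bracketing}) then converges because $\int_0^1 \delta^{-d/(2s)}\,d\delta<\infty$ under $s/d>1/2$ in case (i), and $\int_0^1 \delta^{-1/(2(\gamma/d+1/p))}\,d\delta<\infty$ under $\gamma/d+1/p>1/2$ in case (ii).

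It remains to verify $\mathcal{F}\subset\mathcal{L}_{2,\beta}(P)$, and here I would use that every $f\in B_{pq}^s(\mathbb{R}^d,\vartheta)$ with $s>d/p$ obeys a growth bound $|f(x)|\leq M_f\langle x\rangle^{-\vartheta/2}$, with $M_f$ controlled uniformly by the (bounded) Besov norm of $\mathcal{F}$; since $\gamma>0$ gives $-\vartheta/2\leq(\gamma-\vartheta)/2$, the comparison (2.11) yields $\left\Vert f\right\Vert_{2,\beta}\leq C\left\Vert f\right\Vert_{2r,P}\lesssim \left\Vert \langle\chi_t\rangle^{(\gamma-\vartheta)/2}\right\Vert_{2r,P}<\infty$ by (\ref{Moment Cond FCLT2}), uniformly over $\mathcal{F}$. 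With all hypotheses of DMR Theorem 1 in place the functional CLT follows. I expect the main obstacle to be not any single estimate but the bookkeeping that keeps the three ingredients mutually consistent: one must confirm that the index $2r$ at which the Nickl--P\"{o}tscher bound is invoked is the same index at which (2.11) produces a finite constant and at which the moment condition (\ref{Moment Cond FCLT2}) is stated, so that the $L_{2r}(P)$ bracketing numbers, the $\left\Vert \cdot\right\Vert_{2,\beta}$-membership, and the summability rate on $\beta_m$ all align.
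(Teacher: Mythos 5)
Your proposal is correct and follows essentially the same route as the paper: both reduce the statement to the $L_{2r}(P)$-bracketing version of DMR's Theorem 1 via their condition (2.11) (the H\"{o}lder-type domination $\left\Vert \cdot\right\Vert _{2,\beta}\precsim \left\Vert \cdot\right\Vert _{2r,P}$ under $\sum_{m}m^{1/(r-1)}\beta _{m}<\infty $) and then import the $L_{2r}(P)$ entropy bounds from Theorem 1 of Nickl and P\"{o}tscher (2007), with the moment condition (\ref{Moment Cond FCLT2}) supplying their envelope hypothesis. Your write-up is in fact somewhat more explicit than the paper's, which does not spell out the bracket-transfer step or the verification that $\mathcal{F}\subset \mathcal{L}_{2,\beta }\left( P\right) $.
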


The conditions of Theorem \ref{FCLT2} are the same as given in Theorem
\ref{FCLT_MB} for the case when $\vartheta\leq0.$ However, the limitation of
Theorem \ref{FCLT2} over Theorems \ref{FCLT} and \ref{FCLT_MB} is that it does
not deliver a functional central limit theorem under the minimal condition
(\ref{Cond_beta_sum1}) when $\vartheta>0.$

\section{Application: A Hausman Test for Linearity\label{Sec_Hausman}}

This section considers the problem of testing the specification of the
conditional mean $g\left(  x\right)  =E\left[  y|x\right]  $ for a process
$\chi_{t}=\left(  y_{t},x_{t}\right)  $. The purpose of the section is to
illustrate how the central limit theory developed in this paper can be used to
obtain limiting results for fairly general classes of processes and
conditional mean functions. Because unbounded domains are important in time
series applications, the theory for weighted function spaces is particularly
relevant. Minimal dependence conditions in (\ref{Cond_beta_sum1}) could be
obtained under the additional assumption that the domain of $\chi_{t}$ is
bounded. This is an immediate consequence of results in earlier sections and
is only noted in passing.

The insights underlying the Hausman (1978) test are ingenious and have found
applications to a large number of testing problems in econometrics. For the
particular case considered in this paper the idea is to estimate the
conditional mean by a linear regression of $y_{t}$ on $x_{t}$. The estimator
is generally not consistent for the average partial derivative of the
conditional mean function if the conditional expectation is non-linear. An
alternative estimator uses sieve basis functions to non-parametrically
estimate the possibly non-linear regression. The average derivative of this
estimator is consistent even if the conditional expectation is non-linear.
Thus, under the null of linearity, both estimators should converge to the same
parameter. Under the alternative only the second estimator is consistent while
the first estimator will be asymptotically biased under local alternatives.
The Hausman test exploits these differences in asymptotic behavior by looking
at the difference between the two estimators. Under the null, the test
statistic has a well defined limiting distribution, while under alternatives
the difference between the estimators persists, thus lending power to the test.

Comparing two competing estimators for alternative specifications of average
partial derivatives is appealing from an applied perspective. The test
directly answers the question of whether it is worthwhile to employ more
sophisticated procedures for the estimation of average partial effects or if a
simple linear regression approach is sufficient.

There is a large literature in econometrics and statistics on specification
testing for the conditional mean. Tests against specific alternatives were
considered for example by Cox (1961), Quandt (1974) and Davidson and McKinnon
(1981). Ramsey (1969) and Newey (1985) consider tests of the orthogonality
condition in a regression model while Hausman (1978) and White (1981) consider
model specification tests based on the comparison of two estimators.
Nonparametric tests which have power against a wider range of alternatives
include Bierens (1982), Wooldridge (1992), Yatchew (1992), Zheng (1996) and
Fan and Li (1996). Bierens (1982, 1987) points out that the tests of Hausman
(1978) and White (1981) have power and in some cases consistency properties
that depend on the choice of the estimator that is consistent under both the
null and the alternative. The test considered in this section is pointwise
consistent against all fixed non-parametric deviations $h$ in the class
$B_{\infty\infty}^{s}\left(  \mathbb{R}^{d},\vartheta\right)  $ with
$\operatorname*{Cov}\left(  h,x\right)  +\pi\left(  h\right)  \neq0.$ The
parameter $\pi\left(  h\right)  $ captures the discrepancy between average
partial effects when the model is linear and when it is non-linear. The term
$\operatorname*{Cov}\left(  h,x\right)  $ accounts for linear regression bias
under the alternative. Under the null of a linear conditional mean the local
deviation $h$ is zero and $\pi\left(  h\right)  =0.$

The test proposed in this study has non-trivial power against local
alternatives of the form $n^{-1/2}h\left(  x\right)  $ for fixed $h(x)\in
B_{\infty\infty}^{s}\left(  \mathbb{R}^{d},\vartheta\right)  $ with
$\operatorname*{Cov}\left(  h,x\right)  +\pi\left(  h\right)  \neq0.$ Horowitz
and Spokoiny (2001) point out that the tests of Bierens (1982), Andrews (1997)
and Bierens and Ploberger (1997) have non-trivial power against such
alternatives while the tests of Wooldridge (1992), Yatchew (1992), Zheng
(1996) and Fan and Li (1996) only have non-trivial power against alternatives
that are local at rates slower than $n^{-1/2}.$ Horowitz and Spokoiny (2001)
develop tests that have power against more general alternatives $n^{-1/2}%
h_{n}\left(  x\right)  $ where $h_{n}\left(  x\right)  $ is a sequence of
functions. Their tests have power uniformly against certain smooth
alternatives against which the test in this paper and the tests of Bierens
(1982), Andrews (1997) and Bierens and Ploberger (1997) do not have
non-trivial power. Nevertheless, the appeal of the test proposed in this paper
is its simplicity in terms of implementation and interpretation.

The estimation problem considered in this study is semi-parametric in nature.
The distribution of the test statistic depends on the non-parametric
functional estimated by the second estimator. The influence function of the
test statistic defines an empirical process that can be used to obtain the
limiting distribution under the null and under local alternatives. This is now formalized.

Let $\chi_{t}=\left(  y_{t},x_{t}\right)  \in\mathbb{R}^{2}$ be a strictly
stationary $\beta$-mixing process and define $g\left(  x_{t}\right)  =E\left[
y_{t}|x_{t}\right]  $. Extensions to multivariate $x_{t}$ are straight forward
but omitted for ease of exposition. Consider testing the hypothesis that
$g\left(  x\right)  =\psi_{0}+\psi_{1}x$ against the alternative that
$g\left(  x\right)  $ is a non-linear function of $x.$ A linear regression
estimator for $\psi_{1}$ is generally inconsistent if $g\left(  x\right)
\neq\psi_{0}+\psi_{1}x.$ A Hausman test is then based on the squared
difference for two estimators of $E\left[  \partial g\left(  x_{t}\right)
/\partial x\right]  .$ Under the null, the average partial effect is simply
$\psi_{1}$ which is estimated as a regression of $y_{t}$ on a constant and
$x_{t}.$ Under the alternative, $E\left[  \partial g\left(  x_{t}\right)
/\partial x\right]  $ is estimated by a plug-in series estimator for $g\left(
x\right)  $.

Define $P^{\kappa}\left(  z\right)  =\left(  p_{1\kappa}\left(  z\right)
,...,p_{\kappa\kappa}\left(  z\right)  \right)  ^{\prime}$, where $p_{1\kappa
}\left(  z\right)  =z$ for all $\kappa,$ $\mu_{P}^{\kappa}=E\left[  P^{\kappa
}\left(  z_{t}\right)  \right]  $ and $\tilde{P}^{\kappa}\left(  z\right)
=P^{\kappa}\left(  z\right)  -\mu_{P}^{\kappa}.$ Define $P=\left[  P^{\kappa
}\left(  x_{1}\right)  ,...,P^{\kappa}\left(  x_{n}\right)  \right]  ^{\prime
},$
\[
MP=\left[  P^{\kappa}\left(  x_{1}\right)  -\bar{P}^{\kappa},...,P^{\kappa
}\left(  x_{n}\right)  -\bar{P}^{\kappa}\right]  ^{\prime}%
\]
where $M=I_{n}-n^{-1}\mathbf{1}_{n}\mathbf{1}_{n}^{\prime}$ with
$\mathbf{1}_{n}$ a vector of length one composed of the element one and
$\bar{P}^{\kappa}=n^{-1}\sum_{t=1}^{n}P^{\kappa}\left(  x_{t}\right)  .$ The
series estimator for $E\left[  y|x\right]  $ is $\hat{g}_{\kappa}\left(
x\right)  =\hat{\psi}_{0,\kappa}+P^{\kappa}\left(  x\right)  \hat{\psi
}_{\kappa}$ where $\hat{\psi}_{\kappa}=\left(  P^{\prime}MP\right)
^{-1}P^{\prime}My.$ The estimator for the constant is given by $\hat{\psi
}_{0,\kappa}=\bar{y}-\bar{P}^{\kappa}\hat{\psi}_{\kappa}$ with $\bar{y}%
=n^{-1}\sum_{t=1}^{n}y_{t}.$

Let $\theta=\left(  \theta_{l},\theta_{nl}\right)  $ where $\theta_{l}$ is the
average partial effect under the linear specification and $\theta
_{nl}=E\left[  \partial g\left(  x_{t}\right)  /\partial x\right]  $ is the
average partial effect under the non-linear specification. An estimator for
$\theta$ is based on a Z-estimator\footnote{This terminology appreas for
example in van der Vaart (1998, p 41).} using a plug in non-parametric
estimate $\hat{g}_{k}=\hat{g}_{k}\left(  x\right)  .$ For this purpose define
the moment function
\begin{equation}
\hat{m}\left(  \chi_{t},\theta,\hat{g}_{\kappa}\right)  =\left[
\begin{array}
[c]{c}%
\left(  y_{t}-\bar{y}-\theta_{l}\left(  x_{t}-\bar{x}\right)  \right)  \left(
x_{t}-\bar{x}\right) \\
\frac{\partial P^{\kappa}\left(  x_{t}\right)  }{\partial x}^{\prime}\hat
{\psi}_{\kappa}-\theta_{nl}%
\end{array}
\right]  \label{m_hat_function}%
\end{equation}
and let
\begin{equation}
m_{n}\left(  \theta\right)  =n^{-1}\sum_{t=1}^{n}\hat{m}\left(  \chi
_{t},\theta,\hat{g}_{\kappa}\right)  . \label{Hausman_EmpMom}%
\end{equation}
The Z-estimator $\hat{\theta}_{\kappa}=\left(  \hat{\theta}_{l},\hat{\theta
}_{nl}\right)  $ is obtained by solving $m_{n}\left(  \hat{\theta}_{\kappa
}\right)  =0.$ A Hausman test of linearity then compares the two estimators by
forming the test statistic
\[
\left(  \hat{\theta}_{l}-\hat{\theta}_{nl}\right)  ^{2}%
/\widehat{\operatorname*{Var}}\left(  \hat{\theta}_{l}-\hat{\theta}%
_{nl}\right)  .
\]
The estimator $\hat{\theta}_{l}$ is not usually efficient under the null.
However, it is well known that the Hausman testing principle can still be
applied, albeit at the cost of requiring more complicated expressions for
$\operatorname*{Var}\left(  \hat{\theta}_{l}-\hat{\theta}_{nl}\right)  .$ The
limiting distribution of $\hat{\theta}_{l}-\hat{\theta}_{nl}$ can be analyzed
within the framework of Newey (1994). The results of Newey (1994) show that
non-parametric estimation of $g\left(  x\right)  $ does affect the limiting
distribution of $\hat{\theta}_{l}-\hat{\theta}_{nl},$ but in ways that do not
depend on the specific form of the estimator for $g\left(  x\right)  .$

The limiting distribution of the test statistic is analyzed for the following
data-generating mechanism under local alternatives $g_{h}\left(  x\right)  $,%
\begin{equation}
y_{t}=\psi_{0}+\psi_{1}x_{t}+\frac{h\left(  x_{t}\right)  }{\sqrt{n}}+u_{t}
\label{DGP}%
\end{equation}
where $g_{h}\left(  x\right)  =\psi_{0}+\psi_{1}x_{t}+n^{-1/2}h\left(
x_{t}\right)  $ and $u_{t}=y_{t}-E\left[  y_{t}|x_{t}\right]  $ is such that
$E\left[  u_{t}|x_{t}\right]  =0.$ Assume that $h\left(  x\right)  =h\in
B_{\infty\infty}^{s+1}\left(  \mathbb{R},\vartheta\right)  $ for some $s>1/2$
and some $\vartheta\in\mathbb{R}.$ Let $E\left[  x_{t}\right]  =\mu_{x}$ and
set
\begin{equation}
\tilde{b}\left(  h\right)  =[b\left(  h\right)  ,0]^{\prime} \label{bg}%
\end{equation}
with $b\left(  h\right)  =E\left[  \left(  x_{t}-\mu_{x}\right)  h\left(
x_{t}\right)  \right]  .$ The term $b\left(  h\right)  $ captures biases in
estimating $\psi_{1}$ with a linear regression when $h\neq0.$ Under the null
of a linear conditional mean the function $h$ is $h_{0}\left(  x_{t}\right)
=0$ which implies that $b\left(  h_{0}\right)  =0.$ Let
\[
Q=E\left[  \partial m\left(  \chi_{t},\theta,g_{h}\right)  /\partial
\theta\right]  =\left[
\begin{array}
[c]{cc}%
\sigma_{x}^{2} & 0\\
0 & 1
\end{array}
\right]
\]
where $\sigma_{x}^{2}=E\left[  \left(  x_{t}-\mu_{x}\right)  ^{2}\right]  $.
Let $m\left(  \chi_{t},\theta,g_{h}\right)  $ be the population analog of
$\hat{m}\left(  \chi_{t},\theta,g_{h}\right)  $ defined in
(\ref{m_hat_function}) where in $m(.)$ the empirical estimates $\bar{x}$ and
$\bar{y}$ are replaced with $\mu_{x}$ and $\mu_{y}$. Let $\theta_{0}=\left(
\psi_{1},\theta_{nl}\right)  ^{\prime}$ be the value of $\theta$ for the true
data generating process (\ref{DGP}) under local alternatives. Under regularity
conditions it follows from arguments similar to Newey (1994) that for $h$
fixed,
\[
\sqrt{n}\left(  \hat{\theta}_{\kappa}-\theta_{0}\right)  =Q^{-1}\left(
n^{-1/2}\sum_{t=1}^{n}\left(  m\left(  \chi_{t},\theta_{0},g_{h}\right)
+\gamma\left(  \chi_{t}\right)  \right)  \right)  +o_{p}\left(  1\right)  .
\]
The correction term $\gamma\left(  \chi_{t}\right)  $ accounts for
non-parametric estimation of the nuisance parameter $g_{h}$ and can be derived
using the methods developed in Newey (1994). It is given by
\[
\gamma\left(  \chi_{t}\right)  =\left[
\begin{array}
[c]{c}%
0\\
\delta_{nl}\left(  x_{t}\right)
\end{array}
\right]  \left(  y_{t}-\psi_{0}-\psi_{1}x_{t}-\frac{h\left(  x_{t}\right)
}{\sqrt{n}}\right)
\]
where $\delta_{nl}\left(  x_{t}\right)  =-\zeta_{x}\left(  x\right)
^{-1}\partial\zeta_{x}\left(  x\right)  /\partial x$ and $\zeta_{x}\left(
x\right)  $ is the marginal density of $x_{t},$ see Newey (1994, p.1362) or
Hardle and Stoker (1989). Define the empirical process
\begin{equation}
v_{n}\left(  h\right)  =n^{-1/2}\sum_{t=1}^{n}\left(  m\left(  \chi_{t}%
,\theta_{0},g_{h}\right)  +\gamma\left(  \chi_{t}\right)  -E\left[  m\left(
\chi_{t},\theta_{0},g_{h}\right)  \right]  \right)  \label{Hausman_EP}%
\end{equation}

The central limit theorems developed in the first part of the paper play a
dual role in analyzing the limiting properties of $\hat{\theta}_{\kappa}.$ On
the one hand, stochastic equicontinuity properties of the empirical process
(\ref{Hausman_EP}) can be used to verify regularity conditions in Newey
(1994). On the other hand, the functional central limit theorem delivers a
stochastic process representation of the limiting distribution of $\hat
{\theta}_{\kappa}$ over the class of local alternatives.

\begin{condition}
\label{H_C1}Let $\chi_{t}$ be a strictly stationary and $\beta$-mixing
process. Assume that (\ref{Cond_beta_sum1}) holds. Assume that for some
$\vartheta\in\mathbb{R}$, $\mathcal{F\subset}B_{\infty\infty}^{s+1}\left(
\mathbb{R}^{d},\vartheta\right)  $ is nonempty and bounded, $0\in\mathcal{F}$
and $h\in\mathcal{F}$. Let $\zeta_{x}\left(  x\right)  $ be the marginal
density of $x_{t}$. $\zeta_{x}\left(  x\right)  $ is absolutely continuous
with respect to Lebesgue measure, is continuously differentiable with
derivative $\partial\zeta_{x}\left(  x\right)  /\partial x$ vanishing as
$x\rightarrow\pm\infty$ and $\zeta_{x}\left(  x\right)  ^{-1}\partial\zeta
_{x}\left(  x\right)  /\partial x\in\mathcal{F}.$ Assume that one of the
following conditions hold: \newline(i) $\vartheta\leq-1$ and for some
$\gamma>0$ it follows that $\left\Vert \left\langle \chi_{t}\right\rangle
^{\left(  \gamma-\vartheta-1\right)  /2}\right\Vert _{2,\beta}<\infty,$
$\gamma>s$ and $s>1/2;$ \newline(ii) $\vartheta\leq-1$ and for some $\gamma>0$
it follows that $\left\Vert \left\langle \chi_{t}\right\rangle ^{\left(
\gamma-\vartheta-1\right)  /2}\right\Vert _{2,\beta}<\infty,$ $\gamma<s$ and
$\gamma>1/2.$
\end{condition}

Condition \ref{H_C1} directly leads to the following lemma, which is an
immediate consequence of Theorem \ref{FCLT}. Let
\[
v_{t}=\left[
\begin{array}
[c]{c}%
u_{t}\left(  x_{t}-\mu_{x}\right) \\
\frac{\partial g_{h}\left(  x_{t}\right)  }{\partial x}-\theta_{nl}%
+\delta_{nl}\left(  x_{t}\right)  u_{t}%
\end{array}
\right]
\]
and $\Gamma\left(  h\right)  =\sum_{j=-\infty}^{\infty}E\left[  v_{t}%
v_{t-j}^{\prime}\right]  .$

\begin{lemma}
\label{H_Lemma1}Assume that Condition \ref{H_C1} and \ref{H_C2} hold. Let
$v_{n}\left(  h\right)  $ be defined in (\ref{Hausman_EP}). Then,
$v_{n}\left(  h\right)  \rightsquigarrow v\left(  h\right)  $ where $v\left(
h\right)  $ is a Gaussian process with covariance function $\Gamma\left(
h\right)  $ and a.s. uniformly continuous sample paths.
\end{lemma}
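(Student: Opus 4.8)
The plan is to deduce the lemma from Theorem \ref{FCLT} by realizing each coordinate of the process $v_n(h)$ in (\ref{Hausman_EP}) as a scalar empirical process indexed by a bounded subset of a weighted Besov space, and then checking that the hypotheses collected in Condition \ref{H_C1} are exactly the specialization of the hypotheses of Theorem \ref{FCLT} to those spaces. Since $v_n(h)$ is $\mathbb{R}^2$-valued while Theorem \ref{FCLT} is stated for scalar functions, I would first reduce to the scalar case by the Cram\'er--Wold device: for a fixed $a\in\mathbb{R}^2$ I consider $a'v_n(h)$, establish the functional CLT for this scalar process, and then recover joint finite-dimensional convergence and joint asymptotic equicontinuity (the latter since a finite sum of asymptotically equicontinuous processes is again asymptotically equicontinuous). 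The limiting covariance is identified coordinate by coordinate using the covariance bound of Rio recalled in Section \ref{Sec_CLT}, applied to the stationary sequence $a'v_t$; the relevant fluctuation is generated by the $h$-free part $v_t=u_t[x_t-\mu_x,\ x_t-\mu_x-\delta(x_t)]'$, which yields the covariance $\Gamma(h)=\sum_j E[v_tv_{t-j}']$.

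Next I would isolate the function class that genuinely indexes the process. Writing out $m(\chi_t,\theta,\cdot)$ and $\gamma_\cdot(\chi_t)$, the integrand in (\ref{Hausman_EP}) splits into the $h$-free stochastic part built from $v_t$, which carries the Gaussian limit, and an $h$-indexed part whose coordinates are pointwise products of $h$ with the fixed factors $(x_t-\mu_x)$ and $\delta(x_t)$. The essential step is to show that $h\mapsto h\cdot(x-\mu_x)$ maps the bounded ball $\mathcal{F}\subset B_{\infty\infty}^{s}(\mathbb{R},\vartheta_h)$ into a bounded subset of a weighted Besov space whose weight parameter is shifted to absorb the polynomial growth of the multiplier. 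Because $(x-\mu_x)$ grows like $\langle x\rangle^{1/2}$, multiplication lowers the admissible weight by one unit, which is precisely why Condition \ref{H_C1} imposes $\vartheta\le-2$ and states its moment requirement in terms of $\langle\chi_t\rangle$ raised to a correspondingly shifted power. With this identification, and noting that $p=\infty$ gives $d/p=0$ (so $\gamma>s-d/p$ reads $\gamma>s$ and $\gamma/d+1/p>1/2$ reads $\gamma>1/2$ when $d=1$), cases (i) and (iii) of Condition \ref{H_C1} coincide with cases (iii) and (iv) of Theorem \ref{FCLT} for the shifted space.

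Once the indexing class is placed in a weighted Besov ball and the moment condition (\ref{Moment_Bound})-type requirement of Condition \ref{H_C1} is verified, Theorem \ref{FCLT} applies to the scalar processes $a'v_n(h)$ and gives $a'v_n(h)\rightsquigarrow a'v(h)$, whence $v_n(h)\rightsquigarrow v(h)$ with $v(h)$ Gaussian, covariance $\Gamma(h)$, and a.s.\ uniformly continuous sample paths. It then remains to account for the local-alternative terms, which enter through $h_{0,n}=h_0/\sqrt n$ in $m$ and through the factor $h(x_t)/\sqrt n$ in $\gamma_h$. These terms carry an extra $n^{-1/2}$, so after the outer normalization they appear as averages $n^{-1}\sum_t(\cdot)$ rather than as $n^{-1/2}\sum_t(\cdot)$; by a uniform law of large numbers over $\mathcal{F}$, itself a by-product of the asymptotic equicontinuity just established, they converge to the deterministic drift that becomes the mean of $v(h)$ and leave $\Gamma(h)$ unchanged.

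I expect the multiplier step to be the main obstacle. Two points require care. First, the pointwise-product bound in weighted Besov spaces must be invoked in the correct form (via the multiplication results in Triebel, or the arguments underlying the corollaries of Nickl and P\"otscher), keeping track of the weight shift and of the requirement $s>d/p$ after the shift. Second, the factor $\delta(x)=E[x|\mathcal{G}_1]$ is only an $\mathcal{L}_2(P)$ projection and need not be smooth, so the term $\delta(x)h(x)$ cannot be handled by the same Besov multiplier argument as $(x-\mu_x)h(x)$; fortunately this term enters only through the drift, carrying the extra $n^{-1/2}$, so it suffices to control it in $\mathcal{L}_2(P)$ using $\|h\|_{2,P}\le\|h\|_{2,\beta}$ and the boundedness of $\mathcal{F}$, via a uniform law of large numbers, rather than to place $\delta\cdot h$ in a Besov ball. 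Keeping the CLT-level terms, which must lie in the weighted Besov ball, separate from the drift-level terms, which need only $\mathcal{L}_2$ control, is the organizing idea that makes the reduction to Theorem \ref{FCLT} clean.
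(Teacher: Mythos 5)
Your proposal follows essentially the same route as the paper's proof: decompose $v_n(h)$ into the $h$-free part $u_t[x_t-\mu_x,\;x_t-\mu_x-\delta(x_t)]'$ that carries the Gaussian limit and an $n^{-1}\sum_t$ remainder from the local-alternative term, place the class $h\mapsto h(x)(x-\mu_x)$ in a weighted Besov ball with the weight shifted by one unit so that Theorem \ref{FCLT}(iii)--(iv) applies under Condition \ref{H_C1}, and dispose of the remainder by a uniform law of large numbers. The Cram\'er--Wold reduction and the observation that $\delta(x_t)$ only ever multiplies $u_t$ (so no smoothness of $\delta$ is needed) are details the paper leaves implicit, but they do not change the argument.
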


The following high level regularity conditions are similar to conditions
imposed in Newey (1994). Since this section is mostly meant to highlight the
usefulness of the functional central limit theory discussed in this paper the
regularity conditions are high level with regard to the semiparametric
estimators. A full development of these estimators is beyond the scope of this paper.

\begin{condition}
\label{H_C2}Let $u_{t}=y_{t}-E\left[  y_{t}|x_{t}\right]  .$ Then, \newline i)
$E\left[  u_{t}^{2}|x_{t}\right]  =\sigma_{t}^{2}\left(  x_{t}\right)  $ and
\[
E\left[  \sigma_{t}^{2}\left(  x_{t}\right)  \zeta_{x}\left(  x\right)
^{-2}\left(  \partial\zeta_{x}\left(  x\right)  /\partial x\right)
^{2}\right]  <\infty.
\]
\newline ii) Let $\hat{g}$ be a series estimator of $g_{h}.$ Then, there
exists a sequence $\kappa=\kappa_{n}$ such that $\kappa_{n}\rightarrow\infty$
as $n\rightarrow\infty$ and $\sqrt{n}\left\Vert \hat{g}-g_{h}\right\Vert
_{2,\beta}^{2}=o_{p}\left(  1\right)  .$ \newline iii) $1/\sqrt{n}\sum
_{t=1}^{n}\partial\left(  \hat{g}\left(  x_{t}\right)  -g_{h}\left(
x_{t}\right)  \right)  /\partial x-\gamma\left(  x_{t}\right)  =o_{p}\left(
1\right)  .$
\end{condition}

The next lemma establishes the limiting process for the empirical moment
function $m_{n}\left(  \theta_{\kappa}\right)  .$

\begin{lemma}
\label{H_Lemma2}Assume that Conditions \ref{H_C1} and \ref{H_C2} hold. Let
$m_{n}\left(  \theta_{0}\right)  $ be defined in (\ref{Hausman_EmpMom}). Then,
for $h\in\mathcal{F}$ fixed,%
\[
\sqrt{n}m_{n}\left(  \theta_{0}\right)  =v_{n}\left(  h\right)  +\tilde
{b}\left(  h\right)  +o_{p}\left(  1\right)
\]
and
\[
\sqrt{n}m_{n}\left(  \theta_{0}\right)  \rightarrow_{d}v\left(  h\right)
+\tilde{b}\left(  h\right)
\]
where $v\left(  h\right)  $ is a Gaussian process with covariance function
$\Gamma\left(  h\right)  $ and a.s. uniformly continuous sample paths. The
bias term $\tilde{b}\left(  h\right)  $ is defined in (\ref{bg}).
\end{lemma}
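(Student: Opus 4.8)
The plan is to prove the first (asymptotic-linearity) display by substituting the local data-generating equation $y_t=\psi_0+\psi_1 x_t+h_0(x_t)/\sqrt n+u_t$ into the empirical moment $\hat m(\chi_t,\theta_\kappa,\hat h_\kappa)$ of (\ref{m_hat_function}) and sorting $\sqrt n\,m_n(\theta_\kappa)$ into four pieces: the centered population moment $m(\chi_t,\theta,h_{0,n})$, the first-order correction $\gamma_h$ for estimating $h$ by the sieve $\hat h_\kappa$, a deterministic drift converging to $\tilde b(h)$, and remainders I will argue are $o_p(1)$. The second display then follows at once: by Lemma \ref{H_Lemma1} (whose hypotheses are exactly Condition \ref{H_C1}, via Theorem \ref{FCLT}) one has $v_n(h)\rightsquigarrow v(h)$, the shift $\tilde b(h)$ is deterministic, and the remainder is negligible, so Slutsky's theorem yields $\sqrt n\,m_n(\theta_\kappa)\rightsquigarrow v(h)+\tilde b(h)$.

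First I would treat the first coordinate of $\hat m$, which does not involve $\hat h_\kappa$. Writing $y_t-\bar y-\psi_1(x_t-\bar x)=(h_0(x_t)-\bar h_0)/\sqrt n+(u_t-\bar u)$, this coordinate of $\sqrt n\,m_n$ splits into $n^{-1/2}\sum_t u_t(x_t-\mu_x)$, the first coordinate of the summand defining $v_n(h)$, and $n^{-1}\sum_t(h_0(x_t)-\bar h_0)(x_t-\bar x)$, which converges in probability to $b(h_0)=E[(x_t-\mu_x)h_0(x_t)]$ by a law of large numbers valid under (\ref{Cond_beta_sum1}); this is the surviving first coordinate of $\tilde b(h)$. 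Replacing the empirical means $\bar x,\bar y,\bar u,\bar h_0$ by their population values costs only $o_p(1)$, each error being a product of an $O_p(n^{-1/2})$ centering discrepancy with an $O_p(1)$ average, and the moment bound $E[\sigma_t^2(x_t)x_t^2]<\infty$ from Condition \ref{H_C2}(i) ensures the leading $u_t(x_t-\mu_x)$ term has finite asymptotic variance.

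Next I would treat the second coordinate, which carries $\hat h_\kappa(x_t)$. Decomposing $\hat h_\kappa=h_{0,n}+(\hat h_\kappa-h_{0,n})$, the $h_{0,n}$ part combines with $u_t$ to form the second coordinate of $m(\chi_t,\theta,h_{0,n})$; here the local drift $h_0/\sqrt n$ is matched by the $-h_{0,n}$ built into the moment and therefore cancels, which is why no bias survives in the second coordinate and $\tilde b(h)=[b(h),0]^{\prime}$. The estimation-error part $n^{-1/2}\sum_t(x_t-\mu_x)(\hat h_\kappa(x_t)-h_{0,n}(x_t))$ is linearized by Condition \ref{H_C2}(iii) into the correction $\gamma_h$ up to $o_p(1)$, while the rate $\sqrt n\,\|\hat h_\kappa-h_{0,n}\|_{2,\beta}^2=o_p(1)$ of Condition \ref{H_C2}(ii) controls the residual cross terms produced when $\bar x$ is replaced by $\mu_x$ in the factor multiplying $\hat h_\kappa$, each such term being an $O_p(n^{-1/2})$ centering error times a scaled empirical sum of $\hat h_\kappa-h_{0,n}$ that (ii) bounds through the norm $\|\cdot\|_{2,\beta}$ and the inequality $\|\cdot\|_{2,P}\le\|\cdot\|_{2,\beta}$. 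Collecting the $u_t$ terms of both coordinates together with $\gamma_h$ reproduces exactly the summand of $v_n(h)$ in (\ref{Hausman_EP}), while the drift reproduces $\tilde b(h)$, establishing the first display.

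The main obstacle is legitimizing the plug-in of the random sieve estimate $\hat h_\kappa$ inside the empirical sum: because $\hat h_\kappa$ is data-dependent, $n^{-1/2}\sum_t(x_t-\mu_x)(\hat h_\kappa-h_{0,n})$ cannot be analyzed as a fixed-function average. The device is the asymptotic equicontinuity of the empirical process over the entire class $\mathcal F$, established in Lemma \ref{H_Lemma1} through Theorem \ref{FCLT} (applicable because $\mathcal F\subset B^s_{\infty\infty}(\mathbb R^d,\vartheta_h)$ is bounded and Condition \ref{H_C1} holds), combined with the consistency rate of Condition \ref{H_C2}(ii), which forces $\hat h_\kappa$ and $h_{0,n}$ into an arbitrarily small $\|\cdot\|_{2,\beta}$-ball; together these give $v_n(\hat h_\kappa)-v_n(h_{0,n})=o_p(1)$, so that the first-order term may be identified with $\gamma_h$ via Condition \ref{H_C2}(iii). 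This is precisely where the functional central limit theory of the first part of the paper is needed.
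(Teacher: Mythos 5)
Your proposal is correct and follows essentially the same route as the paper's proof: a Newey (1994)-type linearization that exploits the exact linearity of $m$ in $h$ (so the second-order remainder vanishes identically), uses stochastic equicontinuity of $v_n$ over $\mathcal{F}$ (via Theorem \ref{FCLT}) together with Condition \ref{H_C2}(ii) to control the random plug-in $\hat h_\kappa$, invokes Condition \ref{H_C2}(iii) to identify the correction $\gamma_h$, and concludes with Lemma \ref{H_Lemma1} and Slutsky. The only differences are presentational --- you decompose by direct substitution of the local data-generating equation and treat the replacement of $\bar x,\bar y$ by $\mu_x,\mu_y$ explicitly, whereas the paper organizes the same steps through the functional derivative $D\left( \chi ,h-h_{0}\right) $ and Newey's Assumption 5.3.
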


The following condition is needed to derive an asymptotic limiting
distribution of the estimators for $\theta_{l}$ and $\theta_{nl}.$ The
estimators exist in closed form which greatly simplifies their analysis. For
the representation of the limiting distribution it is useful to partition
$P=\left[  P_{1},P_{2}\right]  $ where $P_{1}=\left[  x_{1},...,x_{n}\right]
^{\prime}$ and $P_{2}=\left[  \left(  p_{2\kappa}\left(  x_{1}\right)
,...,p_{\kappa\kappa}\left(  x_{1}\right)  \right)  ^{\prime},....,\left(
p_{2\kappa}\left(  x_{n}\right)  ,...,p_{\kappa\kappa}\left(  x_{n}\right)
\right)  ^{\prime}\right]  ^{\prime}.$ Then, following Newey (1994, p.1374) an
explicit formula for $\hat{\theta}_{\kappa}$ is given as
\[
\hat{\theta}_{\kappa}=\hat{Q}^{-1}\left[
\begin{array}
[c]{c}%
P_{1}^{\prime}My\\
\hat{\Psi}^{\prime}\left(  P^{\prime}P\right)  ^{-1}P^{\prime}y
\end{array}
\right]
\]
where $\hat{\Psi}=n^{-1}\sum_{t=1}^{n}\partial P^{\kappa}\left(  x_{t}\right)
/\partial x$ and
\[
\hat{Q}=\left[
\begin{array}
[c]{cc}%
n^{-1}\sum_{t=1}^{n}\left(  x_{t}-\bar{x}\right)  ^{2} & 0\\
0 & 1
\end{array}
\right]  .
\]
The following additional high level conditions are imposed.

\begin{condition}
\label{H_C3}i) For $\kappa$ as specified in Condition \ref{H_C2} it follows
that
\[
\hat{Q}=n^{-1}\sum_{t=1}^{n}\left[
\begin{array}
[c]{cc}%
\left(  x_{t}-\bar{x}\right)  ^{2} & 0\\
0 & 1
\end{array}
\right]  \rightarrow_{p}Q=\left[
\begin{array}
[c]{cc}%
\sigma_{x}^{2} & 0\\
0 & 1
\end{array}
\right]
\]
where $\sigma_{x}^{2}=\operatorname*{Var}\left(  x_{t}\right)  $ and $Q$ is a
fixed, positive definite matrix that does not depend on $g.$\newline ii) It
follows that
\[
\sup_{g\in\mathcal{F}}\left\Vert n^{-1/2}\sum_{t=1}^{n}\hat{m}\left(  \chi
_{t},\theta_{0},g\right)  -m\left(  \chi_{t},\theta_{0},g\right)  \right\Vert
=o_{p}\left(  1\right)  .
\]
\newline iii) Assume that $n^{-1}\sum_{t=1}^{n}\left(  x_{t}-\bar{x}\right)
h\left(  x_{t}\right)  =b\left(  h\right)  +o_{p}\left(  1\right)  .$
\end{condition}

The asymptotic limiting distribution of the estimators for $\theta_{l}$ and
$\theta_{nl}$ under the null of $h=0$ and local alternatives is stated in the
next lemma. This distribution then is used to determine critical values for
the Hausman test statistic.

\begin{lemma}
\label{H_Lemma3}Assume that Conditions \ref{H_C1}, \ref{H_C2} and \ref{H_C3}
hold. Then, it follows that for $h$ fixed,
\[
\sqrt{n}\left(  \hat{\theta}_{\kappa}-\theta_{0}\right)  \rightarrow_{d}%
Q^{-1}\left(  v\left(  h\right)  +\tilde{b}\left(  h\right)  \right)
\]
where $\tilde{b}\left(  h_{0}\right)  =0$ and $Q^{-1}v\left(  h\right)  \sim
N\left(  0,Q^{-1}\Gamma\left(  h\right)  Q^{-1}\right)  $. If in addition,
$E\left[  u_{t}|\mathcal{A}^{t-1}\right]  =0$ and $E\left[  u_{t}^{2}%
|x_{t}\right]  =\sigma^{2}$ where $\sigma^{2}$ is constant and $\sigma^{2}>0,$
then it follows that $Q^{-1}v\left(  h\right)  \sim N\left(  0,\sigma
^{2}Q^{-1}\Lambda\left(  h\right)  Q^{-1}\right)  $ where $\Lambda\left(
h\right)  $ is defined as $\Lambda\left(  h\right)  =E\left[  v_{t}%
v_{t}^{\prime}\right]  .$
\end{lemma}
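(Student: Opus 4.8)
The plan is to exploit that the moment function $\hat m(\chi_t,\theta_\kappa,\hat h_\kappa)$ in (\ref{m_hat_function}) is affine in $\theta_\kappa=(\psi_1,\psi_{1,\kappa})$, so that the Z-estimator admits an exact closed-form linearization. The first component of $\hat m$ depends only on $\psi_1$ and the second only on $\psi_{1,\kappa}$, each linearly, and $\hat h_\kappa$ does not depend on $\theta_\kappa$; hence $\partial m_n(\theta_\kappa)/\partial\theta_\kappa'=-\hat Q$, where $\hat Q$ is exactly the diagonal matrix of Condition \ref{H_C3}(i). Because the dependence on $\theta_\kappa$ is affine rather than merely differentiable, the expansion $m_n(\theta_\kappa)=m_n(\theta_0)-\hat Q(\theta_\kappa-\theta_0)$ holds as an exact identity with no remainder, where $\theta_0$ is the true parameter value. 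Evaluating at the solution $\tilde\theta_{\kappa_n}$ of $m_n(\tilde\theta_{\kappa_n})=0$ and using that $\hat Q$ is invertible with probability approaching one (since $\hat Q\to_p Q$ with $Q$ positive definite by Condition \ref{H_C3}(i)), I obtain the exact relation
\begin{equation*}
\sqrt n\,(\tilde\theta_{\kappa_n}-\theta_0)=\hat Q^{-1}\sqrt n\,m_n(\theta_0).
\end{equation*}

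Next I would invoke Lemma \ref{H_Lemma2}, which under Conditions \ref{H_C1} and \ref{H_C2} gives the behavior of the moment vector at the true value, $\sqrt n\,m_n(\theta_0)=v_n(h)+\tilde b(h)+o_p(1)\rightsquigarrow v(h)+\tilde b(h)$; for a fixed local-alternative function $h_0$ this yields $\sqrt n\,m_n(\theta_0)\to_d v(h)+\tilde b(h)$. Combining this with $\hat Q\to_p Q$ (a nonrandom limit) via Slutsky's theorem and the continuous mapping theorem applied to matrix inversion delivers
\begin{equation*}
\sqrt n\,(\tilde\theta_{\kappa_n}-\theta_0)\to_d Q^{-1}\bigl(v(h)+\tilde b(h)\bigr).
\end{equation*}
Under the null $h_0=0$ one has $\tilde b(0)=[b(0),0]'=[E[(x_t-\mu_x)\cdot 0],0]'=0$, so the bias term drops. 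Since $v(h)$ is the Gaussian limit of Lemma \ref{H_Lemma1} with covariance $\Gamma(h)$, it follows that $v(h)\sim N(0,\Gamma(h))$ and hence $Q^{-1}v(h)\sim N(0,Q^{-1}\Gamma(h)Q^{-1})$ for fixed $h$.

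Finally, for the simplified covariance I would add the hypotheses $E[u_t\mid\mathcal A^{t-1}]=0$ and $E[u_t^2\mid x_t]=\sigma^2$ and compute $\Gamma(h)=\sum_{j}E[v_tv_{t-j}']$ with $v_t=u_t w_t$ and $w_t=[x_t-\mu_x,\ x_t-\mu_x-\delta(x_t)]'$. The martingale-difference property, together with $E[u_t\mid x_t]=0$ so that $u_t$ is orthogonal to every function of $x_t$, makes $v_t$ serially uncorrelated, so that every nonzero-lag autocovariance vanishes and $\Gamma(h)=E[v_tv_t']$. Conditional homoskedasticity then gives $E[v_tv_t']=\sigma^2E[w_tw_t']$ by the law of iterated expectations, and the verification $E[w_tw_t']=\Lambda$ is the single genuinely computational step: the $(1,1)$ entry is $\mathrm{Var}(x_t)=\Delta_{11}$; the $(2,2)$ entry is $\mathrm{Var}(x_t-\delta(x_t))=\Delta_{11}-\Delta_{12}\Delta_{22}^{-1}\Delta_{21}$; and the off-diagonal entry $E[(x_t-\mu_x)(x_t-\mu_x-\delta(x_t))]$ equals $\mathrm{Var}(x_t-\delta(x_t))$ because $\delta(x_t)$ is the $\mathcal L_2(P)$ projection onto $\mathcal G_1$ and the residual $x_t-\mu_x-\delta(x_t)$ is orthogonal to $\delta(x_t)\in\mathcal G_1$. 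This reproduces $\Lambda$ and yields $Q^{-1}v(h)\sim N(0,\sigma^2Q^{-1}\Lambda Q^{-1})$.

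I expect the main obstacle to be the covariance simplification under the martingale-difference assumption: the delicate point is the measurability bookkeeping needed to show the nonzero-lag terms vanish, since $w_t$ depends on the contemporaneous regressor $x_t$ while the martingale-difference property is stated relative to $\mathcal A^{t-1}$, so one must combine $E[u_t\mid\mathcal A^{t-1}]=0$ with the contemporaneous restriction $E[u_t\mid x_t]=0$ to conclude that $u_t$ is uncorrelated with products of current and lagged regressor functions. The remainder of the argument is routine given the exact linearization and Lemmas \ref{H_Lemma1} and \ref{H_Lemma2}.
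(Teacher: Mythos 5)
Your proposal is correct and follows essentially the same route as the paper: the paper likewise exploits the closed form of the Z-estimator (equivalently, your exact affine linearization yielding $\sqrt{n}(\tilde{\theta}_{\kappa_n}-\theta_0)=\hat{Q}^{-1}\sqrt{n}\,m_n(\theta_0)$), then applies Condition \ref{H_C3}(i) with Slutsky and concludes from Lemmas \ref{H_Lemma1} and \ref{H_Lemma2}. Your explicit verification that $\Gamma(h)=\sigma^{2}E[w_tw_t']=\sigma^{2}\Lambda$ under the martingale-difference and homoskedasticity hypotheses is detail the paper leaves implicit, and your flagged concern there (that killing the nonzero-lag terms really needs $E[u_t\mid x_t,\mathcal{A}^{t-1}]=0$ rather than the two separate conditionings stated) is a fair point about the paper's own assumptions, not a defect of your argument.
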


To form the Hausman statistic assume that $\hat{\Gamma}$ is a consistent
estimator of $\Gamma$ and $\hat{Q}$ is consistent for $Q$ by Condition
\ref{H_C3}. Let $e=\left(  1,-1\right)  ^{\prime}.$ A generalized Hausman
statistic to test the null hypothesis of a linear conditional mean then is
given as
\begin{equation}
\hat{H}_{1}=\frac{n\left(  \hat{\theta}_{l}-\hat{\theta}_{nl}\right)  ^{2}%
}{e^{\prime}\hat{Q}^{-1}\hat{\Gamma}\hat{Q}^{-1}e} \label{Hausman-Stat1}%
\end{equation}
If the additional conditions imposed on $u_{t}$ in Lemma \ref{H_Lemma3} hold
then the test statistic can be simplified to
\begin{equation}
\hat{H}_{2}=\frac{n\left(  \hat{\theta}_{l}-\hat{\theta}_{nl}\right)  ^{2}%
}{\left(  e^{\prime}\hat{Q}^{-1}\hat{\Lambda}\hat{Q}^{-1}e\right)  }.
\label{Hausman-Stat2}%
\end{equation}
The limiting distributions of the two Hausman statistics are summarized in the
following Theorem.

\begin{theorem}
\label{Hausman-Theorem}Assume that Conditions \ref{H_C1}, \ref{H_C2} and
\ref{H_C3} hold. Let $\pi\left(  h\right)  =\psi_{1}-\theta_{nl}=-E\left[
\partial h\left(  x_{t}\right)  /\partial x\right]  .$ Then, $\hat{H}_{1}$
defined in (\ref{Hausman-Stat1}) converges (pointwise for $h$ fixed) to a
non-central $\chi^{2}$ process
\[
\hat{H}_{1}\rightarrow_{d}\chi_{1}^{2}\left(  \tilde{\lambda}_{1}\right)
\]
where for fixed $h,$ $\chi_{1}^{2}\left(  \tilde{\lambda}_{1}\right)  $ is a
non-central chi-square distribution with one degree of freedom and
non-centrality parameter $\tilde{\lambda}_{1}$ and
\[
\tilde{\lambda}_{1}=\frac{b\left(  h\right)  +\pi\left(  h\right)  }%
{\sigma_{x}^{2}\sqrt{e^{\prime}Q^{-1}\Gamma\left(  h\right)  Q^{-1}e}}.
\]
If in addition, $E\left[  u_{t}|\mathcal{A}^{t-1}\right]  =0$ and $E\left[
u_{t}^{2}|x_{t}\right]  =\sigma^{2}$ where $\sigma^{2}$ is constant and
$\sigma^{2}>0,$ then it follows that
\[
\hat{H}_{1}\rightarrow_{d}\chi_{1}^{2}\left(  \tilde{\lambda}_{2}\right)
,\text{ }\hat{H}_{2}\rightarrow_{d}\chi_{2}^{2}\left(  \tilde{\lambda}%
_{2}\right)
\]
where the non-centrality parameter $\tilde{\lambda}_{2}$ is given by
\[
\tilde{\lambda}_{2}=\frac{b\left(  h\right)  +\pi\left(  h\right)  }%
{\sigma_{x}^{2}\sqrt{e^{\prime}Q^{-1}\Lambda\left(  h\right)  Q^{-1}e}}.
\]

\end{theorem}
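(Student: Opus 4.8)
The plan is to reduce both statistics to the standardized square of the scalar contrast $e'\tilde\theta_{\kappa}$ and then invoke Lemma \ref{H_Lemma3} together with the consistency statements in Condition \ref{H_C3}. First I would observe that, with $e=(1,-1)'$, the numerator contrast is $\tilde\psi_1-\tilde\psi_{1,\kappa}=e'\tilde\theta_{\kappa}$, and that under the (local) null the two components of $\theta_0$ coincide so that $e'\theta_0=0$. Hence $\sqrt{n}(\tilde\psi_1-\tilde\psi_{1,\kappa})=\sqrt{n}\,e'(\tilde\theta_{\kappa}-\theta_0)$, and Lemma \ref{H_Lemma3} gives $\sqrt{n}\,e'(\tilde\theta_{\kappa}-\theta_0)\rightarrow_d e'Q^{-1}(v(h)+\tilde b(h))$ for $h$ fixed. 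Since $Q^{-1}v(h)\sim N(0,Q^{-1}\Gamma(h)Q^{-1})$, the limit is a scalar normal with deterministic mean $e'Q^{-1}\tilde b(h)$ and variance $e'Q^{-1}\Gamma(h)Q^{-1}e$.

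Second, I would handle the denominator of $\hat H_1$. By Condition \ref{H_C3}(i) one has $\hat Q\rightarrow_p Q$, and by hypothesis $\hat\Gamma\rightarrow_p\Gamma$, so $e'\hat Q^{-1}\hat\Gamma\hat Q^{-1}e\rightarrow_p e'Q^{-1}\Gamma Q^{-1}e$ by the continuous mapping theorem. Writing $Z=e'Q^{-1}v(h)$ and $\mu=e'Q^{-1}\tilde b(h)$, a further application of the continuous mapping theorem to $\hat H_1=n(e'(\tilde\theta_\kappa-\theta_0))^2/(e'\hat Q^{-1}\hat\Gamma\hat Q^{-1}e)$ then yields $\hat H_1\rightarrow_d (Z+\mu)^2/\mathrm{Var}(Z)$. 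Because $(Z+\mu)/\sqrt{\mathrm{Var}(Z)}$ is $N(\mu/\sqrt{\mathrm{Var}(Z)},1)$, its square is a non-central $\chi^2_1$ with non-centrality $\mu/\sqrt{\mathrm{Var}(Z)}$ in the paper's convention. To match the stated $\tilde\lambda_1$ I would use the diagonal form $Q=\mathrm{diag}(\Delta_{11},\Delta_{11})$ implied by Condition \ref{H_C3}(i) together with $\tilde b(h)=[b(h),0]'$, which gives $\mu=e'Q^{-1}\tilde b(h)=b(h)/\Delta_{11}$ and hence $\tilde\lambda_1=b(h)/(\Delta_{11}\sqrt{e'Q^{-1}\Gamma Q^{-1}e})$.

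Third, for the homoskedastic, martingale-difference case I would substitute $\Gamma=\sigma^2\Lambda$, which is exactly the covariance delivered by the second part of Lemma \ref{H_Lemma3}. A direct computation using $Q^{-1}e=(1/\Delta_{11},-1/\Delta_{11})'$ and the block form of $\Lambda$ in Condition \ref{H_C3}(ii) reduces the quadratic form to $e'Q^{-1}\Lambda Q^{-1}e=\Delta_{12}\Delta_{22}^{-1}\Delta_{21}/\Delta_{11}^2$, and inserting this into the formula for $\tilde\lambda_1$ produces $\tilde\lambda_2$. For $\hat H_2$ I would verify that its denominator carries the correct $n^{-1}$ order: since each block $\hat\Delta_{ij}$ is an unnormalized sum of order $n$, Condition \ref{H_C3}(ii) gives $\hat\Delta_{12}\hat\Delta_{22}^{-1}\hat\Delta_{21}/\hat\Delta_{11}^2=n^{-1}(\Delta_{12}\Delta_{22}^{-1}\Delta_{21}/\Delta_{11}^2)(1+o_p(1))$, so that after scaling numerator and denominator by $n$ and using $\hat\sigma^2\rightarrow_p\sigma^2$ the statistic $\hat H_2$ shares the non-central $\chi^2_1$ limit of $\hat H_1$ with the same non-centrality $\tilde\lambda_2$.

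The main obstacle I anticipate is the bookkeeping in the denominator arguments rather than any deep probabilistic step: the distributional identification is immediate once Lemma \ref{H_Lemma3} is in hand, but one must track carefully the $O_p(n)$ scaling of the unnormalized moment blocks $\hat\Delta_{ij}$ together with the $n^{-1}$ normalization built into $\hat Q$, so that the explicit $n$ in the numerator of $\hat H_1$ and the implicit $n^{-1}$ in the denominator of $\hat H_2$ cancel correctly and leave a proper, non-degenerate non-central $\chi^2$ limit. A secondary point is to confirm that $e'\theta_0=0$ genuinely holds under the local alternative (so that the deterministic shift enters only through $\tilde b(h)$ and not through $e'\theta_0$), which follows because the linear coefficient targeted by both components coincides up to the $O(n^{-1/2})$ perturbation $h_{0,n}$.
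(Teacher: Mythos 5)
Your proposal is correct and follows essentially the same route as the paper: express the studentized contrast as $e^{\prime}\sqrt{n}(\tilde{\theta}_{\kappa}-\theta_{0})$ normalized by the estimated standard deviation, invoke Lemma \ref{H_Lemma3} and the continuous mapping theorem, and evaluate the quadratic forms using $Q=\mathrm{diag}(\Delta_{11},\Delta_{11})$ and $\tilde{b}(h)=[b(h),0]^{\prime}$. You supply several details the paper leaves implicit --- the consistency of the denominator, the computation $e^{\prime}Q^{-1}\Lambda Q^{-1}e=\Delta_{12}\Delta_{22}^{-1}\Delta_{21}/\Delta_{11}^{2}$, and the $O_{p}(n)$ bookkeeping for the unnormalized blocks in $\hat{H}_{2}$ (which the paper dispatches with ``follows in the same way'') --- but the argument is the same.
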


Theorem \ref{Hausman-Theorem} establishes that under the null hypothesis of a
linear conditional mean of $y_{t}$ the limiting distribution of $\hat{H}_{1}$
and, under additional conditions, of $\hat{H}_{2}$ are asymptotically
$\chi_{1}^{2}.$ For a significance level $\alpha,$ let $c_{\alpha}$ be the
critical value of the central $\chi_{1}^{2}$ distribution, i.e. $\alpha
=\Pr\left(  \chi_{1}^{2}>c_{\alpha}\right)  .$ The null hypothesis of a linear
conditional mean then is rejected if $\hat{H}_{1}>c_{\alpha}$ or $\hat{H}%
_{2}>c_{\alpha}.$

The analysis in Theorem \ref{Hausman-Theorem} also shows how the power of the
test against local alternatives depends on the local alternative $h$ and the
marginal distribution of $x_{t}$. The term $b\left(  h\right)  $ captures the
bias in estimating the coefficient $\psi_{1}$ of the linear term in $g\left(
x\right)  $ by linear regression. The term $\pi\left(  h\right)  $ captures
the discrepancy between the two estimators due to the difference between
$\psi_{1}$ and $\theta_{nl}.$ The asymptotic power function of the test is
given by $\Pr\left(  \chi_{1}^{2}\left(  \tilde{\lambda}_{1}\right)
>c_{\alpha}\right)  $ as $h$ ranges over the set of permissible alternatives.

\section{Conclusion}

The paper combines recent results on bracketing numbers for weighted Besov
spaces with a functional central limit theorem for strictly stationary $\beta
$-mixing processes. It is shown that by specializing the bracketing results to
a particular Hilbert space of relevance to the dependent limit theory,
functional central limit theorems for dependent processes indexed by Besov
classes can be obtained directly. These insights lead to some new results in
function spaces with polynomially decaying functions over unbounded domains
and smooth functions over bounded domains.

It is shown how the limit theory can be used to simplify some proofs in the
analysis of semiparametric estimators and tests. An example of a Hausman test
for linearity is considered in detail. More specifically, the central limit
theorem implies a stochastic equicontinuity property that helps shorten
arguments needed to establish the limiting behavior of the test. The central
limit theory also allows to represent the limiting distribution over a class
of local alternatives under general conditions. Finally, a comparison of two
versions of the test when stronger conditions on the model are imposed is provided.

A number of the conditions imposed in Section \ref{Sec_Hausman} are high
level. A detailed analysis of non-parametric estimation in weighted Besov
spaces is beyond the scope of the paper and left for future research.\newpage

\appendix

\section{Proofs\label{Appendix}}

\begin{proof}
[Proof of Theorem \ref{Theorem Bracketing Besov}]The proof follows the
argument in Nickl and P\"{o}tscher (2007, p.184). Let $N\left(  \delta
,\mathcal{F}\text{,}\left\Vert .\right\Vert _{\infty}\right)  $ be the minimal
covering number of $\mathcal{F}$ with respect to $\left\Vert .\right\Vert
_{\infty}$ and $H\left(  \delta,\mathcal{F}\text{,}\left\Vert .\right\Vert
_{\infty}\right)  =\log N\left(  \delta,\mathcal{F}\text{,}\left\Vert
.\right\Vert _{\infty}\right)  $ the metric entropy for $\mathcal{F}$. From
Nickl and P\"{o}tscher (2007, p.184, Eq.3) it follows that for all
$\vartheta\in\mathbb{R}$ and all $\gamma>0$
\begin{equation}
H\left(  \delta,\mathcal{F},\left\Vert \left\langle x\right\rangle ^{\left(
\vartheta-\gamma\right)  /2}\right\Vert _{\infty}\right)  \precsim\left\{
\begin{array}
[c]{cc}%
\delta^{-d/s} & \text{if }\gamma>s-d/p\\
\delta^{-\left(  \gamma/d+1/p\right)  ^{-1}} & \text{if }\gamma<s-d/p
\end{array}
\right.  \label{UniformBound1}%
\end{equation}
Let $B_{i}$ be closed balls in $C\left(  \mathbb{R}^{d},\left\langle
x\right\rangle ^{\left(  \vartheta-\gamma\right)  /2}\right)  =\left\{
f:f\left(  .\right)  \left\langle x\right\rangle ^{\left(  \vartheta
-\gamma\right)  /2}\in C\left(  \mathbb{R}^{d}\right)  \right\}  $ with radius
$\delta$ (relative to the norm $\left\Vert \left(  .\right)  \left\langle
x\right\rangle ^{\left(  \vartheta-\gamma\right)  /2}\right\Vert _{\infty}$)
covering $\mathcal{F}$. Note that the number of such balls is $N\left(
\delta,\mathcal{F}\text{,}\left\Vert \left(  .\right)  \left\langle
x\right\rangle ^{\left(  \vartheta-\gamma\right)  /2}\right\Vert _{\infty
}\right)  .$ Let $f_{i}$ be the center of $B_{i}.$ Then each $B_{i}$ contains
the functions $f$ such that
\[
\sup_{x\in\mathbb{R}^{d}}\left\vert f\left(  x\right)  -f_{i}\left(  x\right)
\right\vert \left\langle x\right\rangle ^{\left(  \vartheta-\gamma\right)
/2}\leq\delta.
\]
The brackets
\[
\left[  f_{i}\left(  x\right)  -\delta\left\langle x\right\rangle ^{\left(
\gamma-\vartheta\right)  /2},f_{i}\left(  x\right)  +\delta\left\langle
x\right\rangle ^{\left(  \gamma-\vartheta\right)  /2}\right]
\]
are contained in $B_{i}$ and cover $\mathcal{F}$. The $\mathcal{L}_{2,\beta
}\left(  P\right)  $ norm of these brackets is
\[
\left\Vert 2\delta\left\langle x\right\rangle ^{\left(  \gamma-\vartheta
\right)  /2}\right\Vert _{2,\beta}.
\]
First consider the case when $\vartheta>0.$ In that case one can choose
$\gamma=\vartheta.$ Then, $\left\Vert 2\delta\left\langle x\right\rangle
^{\left(  \gamma-\vartheta\right)  /2}\right\Vert _{2,\beta}=\left\Vert
2\delta\right\Vert _{2,\beta}.$ Now note that for the constant function
$\delta$%
\[
Q_{\delta}\left(  u\right)  =\inf\left(  t:P\left(  \left\vert \delta
\right\vert >t\right)  \leq u\right)  =\delta
\]
such that
\[
\left\Vert 2\delta\right\Vert _{2,\beta}^{2}=\sum_{m=0}^{\infty}\int%
_{0}^{\beta_{m}}\left(  Q_{2\delta}\left(  u\right)  \right)  ^{2}du=\left(
2\delta\right)  ^{2}\sum_{m=0}^{\infty}\beta_{m}<\infty
\]
by Condition (\ref{Cond_beta_sum1}). One obtains from Nickl and P\"{o}tscher
(2007, p.184, eq. 4) that
\[
H_{[]}\left(  2\delta\sum_{m=0}^{\infty}\beta_{m},\mathcal{F},\left\Vert
{}\right\Vert _{2,\beta}\right)  \leq H\left(  \delta,\mathcal{F},\left\Vert
{}\right\Vert _{\infty}\right)
\]
such that the result follows immediately from (\ref{UniformBound1}).

When $\vartheta\leq0$ the brackets have size
\[
2\delta\left\Vert \left\langle x\right\rangle ^{\left(  \gamma-\vartheta
\right)  /2}\right\Vert _{2,\beta}<\infty
\]
which is bounded by the conditions of the Theorem. It follows again by Nickl
and P\"{o}tscher (2007, p.184, eq. 4) that
\begin{equation}
H_{[]}\left(  2\delta\left\Vert \left\langle x\right\rangle ^{\left(
\gamma-\vartheta\right)  /2}\right\Vert _{2,\beta},\mathcal{F},\left\Vert
{}\right\Vert _{2,\beta}\right)  \leq H\left(  \delta,\mathcal{F},\left\Vert
\left(  .\right)  \left\langle x\right\rangle ^{\left(  \vartheta
-\gamma\right)  /2}\right\Vert _{\infty}\right)  . \label{BracketingBound1}%
\end{equation}
Then, (\ref{UniformBound1}) delivers the stated result.
\end{proof}

\begin{proof}
[Proof of Corollary \ref{Corollary Bracketing Besov}]From the proof of Theorem
\ref{Theorem Bracketing Besov} the $\mathcal{L}_{2,\beta}\left(  P\right)  $
norm of the brackets is, for all $\gamma>0$ and all $\vartheta\in\mathbb{R}$,
\[
\left\Vert 2\delta\left\langle x\right\rangle ^{\left(  \gamma-\vartheta
\right)  /2}\right\Vert _{2,\beta}\leq2\delta M^{\left(  \gamma-\vartheta
\right)  /2}\sum_{m=0}^{\infty}\beta_{m}<\infty.
\]
Therefore, the bound in (\ref{BracketingBound1}) can be applied and the result
again follows by (\ref{UniformBound1}).
\end{proof}

\begin{proof}
[Proof of Theorem \ref{FCLT}]The result follows from Theorem 1 in DMR once all
of their conditions are verified. First show that $\mathcal{F\in L}_{2,\beta
}\left(  P\right)  .$ Let $\mathcal{L}\left(  \beta\right)  $ be the class of
integer valued random variables with distribution function $G_{\beta}\left(
n\right)  =1-\beta_{n} $ for any $n\in\mathbb{N}$ (see DMR, p. 423). For any
$b\in\mathcal{L}\left(  \beta\right)  $ and some real number $K>0$ it follows
that
\begin{align}
E\left[  bf^{2}\left(  \chi_{t}\right)  \right]   &  =E\left[  b\left\langle
\chi_{t}\right\rangle ^{-\vartheta}\left(  f\left(  \chi_{t}\right)
\left\langle \chi_{t}\right\rangle ^{\vartheta/2}\right)  ^{2}\right]
\label{NP_Prop3}\\
&  \leq\left(  \sup_{x\in\mathbb{R}^{d}}\sup_{f\in\mathcal{F}}\left\vert
f\left(  x\right)  \left\langle x\right\rangle ^{\vartheta/2}\right\vert
\right)  ^{2}E\left[  b\left\langle \chi_{t}\right\rangle ^{-\vartheta}\right]
\nonumber\\
&  \leq K^{2}E\left[  b\left\langle \chi_{t}\right\rangle ^{-\vartheta}\right]
\nonumber
\end{align}
where the first inequality is obtained by applying Proposition 3 of Nickl and
P\"{o}tscher (2007) and because $f\left(  x\right)  \left\langle
x\right\rangle ^{\vartheta/2}\in\mathcal{F}$ by assumption. For any
$f\in\mathcal{F}$ it follows from DMR, Eq. (6.2) and
\begin{align}
\left\Vert f\right\Vert _{2,\beta}  &  =\sup_{b\in\mathcal{L}\left(
\beta\right)  }\sqrt{E\left[  bf^{2}\left(  \chi_{t}\right)  \right]
}\label{2_beta_bound1}\\
&  \leq K\sup_{b\in\mathcal{L}\left(  \beta\right)  }\sqrt{E\left[
b\left\langle \chi_{t}\right\rangle ^{-\vartheta}\right]  }\nonumber
\end{align}
where the inequality uses (\ref{NP_Prop3}). If $\vartheta\geq0$ the
inequality
\[
\left\langle \chi_{t}\right\rangle ^{-\vartheta}\leq1
\]
together with $b\geq0$ leads to
\begin{equation}
\left\Vert f\right\Vert _{2,\beta}\leq K\sup_{b\in\mathcal{L}\left(
\beta\right)  }\sqrt{E\left[  b\right]  }=K\left\Vert 1\right\Vert _{2,\beta
}=K\sqrt{\sum_{m=0}^{\infty}\beta_{m}}. \label{2_beta_bound3}%
\end{equation}
When $\vartheta<0$, (\ref{2_beta_bound1}) leads to
\begin{equation}
\left\Vert f\right\Vert _{2,\beta}\leq K\left\Vert \left\langle \chi
_{t}\right\rangle ^{-\vartheta}\right\Vert _{2,\beta}. \label{2_beta_bound2}%
\end{equation}
Since in this case,
\[
\left\langle \chi_{t}\right\rangle ^{-\vartheta}\geq1
\]
and for any $\gamma>0,$
\[
\left\langle \chi_{t}\right\rangle ^{\gamma-\vartheta}\geq\left\langle
\chi_{t}\right\rangle ^{-\vartheta}%
\]
it follows from (\ref{2_beta_bound2}) that
\begin{equation}
\left\Vert f\right\Vert _{2,\beta}\leq K\left\Vert \left\langle \chi
_{t}\right\rangle ^{\gamma-\vartheta}\right\Vert _{2,\beta}<\infty
\label{2_beta_bound4}%
\end{equation}
which is bounded by assumption. Thus, (\ref{2_beta_bound3}) and
(\ref{2_beta_bound4}) show that $f\in\mathcal{F\subset}B_{pq}^{s}\left(
\mathbb{R}^{d},\vartheta\right)  $ with either $\vartheta\geq0$ or
$\vartheta<0$ and some $\gamma>0$ such that $\left\Vert \left\langle
x\right\rangle ^{\left(  \gamma-\vartheta\right)  /2}\right\Vert _{2,\beta
}<\infty$ implies that $\mathcal{F\in L}_{2,\beta}\left(  P\right)  .$

It remains to be show that
\begin{equation}
\int_{0}^{1}\sqrt{H_{[]}\left(  \delta,\mathcal{F},\left\Vert {}\right\Vert
_{2,\beta}\right)  }d\delta<+\infty. \label{Integral bound}%
\end{equation}
For case (i) Theorem \ref{Theorem Bracketing Besov} implies that
$H_{[]}\left(  \delta,\mathcal{F},\left\Vert {}\right\Vert _{2,\beta}\right)
\precsim\delta^{-d/s}$ such that (\ref{Integral bound}) holds for $d/2s<1.$
For case (ii) Theorem \ref{Theorem Bracketing Besov} implies that
$H_{[]}\left(  \delta,\mathcal{F},\left\Vert {}\right\Vert _{2,\beta}\right)
\precsim\delta^{-\left(  \gamma/d+1/p\right)  ^{-1}}$ such that
(\ref{Integral bound}) holds for $1/2\left(  \gamma/d+1/p\right)  ^{-1}<1.$
Cases (iii) and (iv) follow in the same way. This establishes the result.
\end{proof}

\begin{proof}
[Proof of Corollary \ref{Corollary FCLT}]For any $s>d/p$ fix $\vartheta$ such
that $\vartheta>s-d/p.$ By construction $0<\vartheta<\infty$ and thus
$f\left(  .\right)  \left\langle x\right\rangle ^{\vartheta}$ is bounded for
$x\in\mathfrak{X}$ and $f\left(  .\right)  \left\langle x\right\rangle
^{\vartheta}\in B_{pq}^{s}\left(  \mathfrak{X},\vartheta\right)  $. As in
Nickl and P\"{o}tscher (2007, p.186), conclude that $\mathcal{F}\subseteq
B_{pq}^{s}\left(  \mathfrak{X},\vartheta\right)  .$ The results of Theorem
\ref{FCLT} can now be applied. In particular, using the bound in
(\ref{NP_Prop3}) leads to
\begin{equation}
\left\Vert f\right\Vert _{2,\beta}\leq K\sup_{b\in\mathcal{L}\left(
\beta\right)  }\sqrt{E\left[  b\left\langle \chi_{t}\right\rangle
^{-\vartheta}\right]  }\leq KM^{-\vartheta/2}\sqrt{\sum_{m=0}^{\infty}%
\beta_{m}}<\infty.\nonumber
\end{equation}
The result now follows from the fact that \ref{Integral bound} holds by the
results in Corollary \ref{Corollary Bracketing Besov}.
\end{proof}

\begin{proof}
[Proof of Theorem \ref{FCLT_MB}]From DMR Lemma 2, (S.1) and p. 404 it follows
for $\phi\left(  x\right)  =x^{r}$ with $r>1$ that
\begin{equation}
\sum_{m=1}^{\infty}m^{1/\left(  r-1\right)  }\beta_{m}<\infty
\label{D_FCLT_MB_1}%
\end{equation}
and
\begin{equation}
\left\Vert \left\langle \chi_{t}\right\rangle ^{\left(  \gamma-\vartheta
\right)  /2}\right\Vert _{2r,P}<\infty\label{D_FCLT_MB_2}%
\end{equation}
is sufficient for $\left\Vert \left\langle \chi_{t}\right\rangle ^{\left(
\gamma-\vartheta\right)  /2}\right\Vert _{2,\beta}<\infty.$ Note that
(\ref{D_FCLT_MB_2}) holds since $r\left(  \gamma-\vartheta\right)  >1$ and by
Jensen's inequality
\[
\left\Vert \left\langle \chi_{t}\right\rangle ^{\left(  \gamma-\vartheta
\right)  /2}\right\Vert _{2r,P}^{2r}=E\left[  \left\langle \chi_{t}%
\right\rangle ^{r\left(  \gamma-\vartheta\right)  }\right]  \leq1+E\left[
\left\Vert \chi_{t}\right\Vert ^{2r\left(  \gamma-\vartheta\right)  }\right]
<\infty
\]
where the expectation on the RHS is bounded by assumption. The result now
follows from Theorem \ref{FCLT}.
\end{proof}

\begin{proof}
[Proof of Theorem \ref{FCLT2}]The result follows from DMR (eq 2.11) and (eq.
S.1). In particular, the condition
\begin{equation}
\int_{0}^{1}\sqrt{H_{\left[  {}\right]  }\left(  t,,\left\Vert .\right\Vert
_{2p}\right)  dt}<\infty\label{FCLT2_1}%
\end{equation}
needs to hold. From Nickl and P\"{o}tscher (2007) it follows that under the
stated conditions in (i),
\[
H_{\left[  {}\right]  }\left(  t,,\left\Vert .\right\Vert _{2p}\right)
\precsim t^{-d/s}%
\]
such that (\ref{FCLT2_1}) holds as long as $d/\left(  2s\right)  <1$ or
$1/2<s/d.$ Under conditions (ii) one obtains similarly that
\[
H_{\left[  {}\right]  }\left(  t,,\left\Vert .\right\Vert _{2p}\right)
\precsim t^{-\left(  \gamma/d+1/p\right)  ^{-1}}%
\]
such that (\ref{FCLT2_1}) holds as long as $rp/\left(  \gamma p+d\right)  <1$
or $1/2<\left(  \gamma/d+1/p\right)  .$
\end{proof}

\begin{proof}
[Proof of Lemma \ref{H_Lemma1}]Recall that
\[
m\left(  \chi_{t},\theta_{0},\hat{g}_{\kappa}\right)  =\left[
\begin{array}
[c]{c}%
\left(  y_{t}-\mu_{y}-\psi_{1}\left(  x_{t}-\mu_{x}\right)  \right)  \left(
x_{t}-\mu_{x}\right) \\
\frac{\partial P^{\kappa}\left(  x_{t}\right)  }{\partial x}^{\prime}\hat
{\psi}_{\kappa}-\theta_{nl}%
\end{array}
\right]
\]
and that
\[
E\left[  y_{t}\right]  =\psi_{0}+\psi_{1}E\left[  x_{t}\right]  +\mu_{h}.
\]
where $\mu_{h}=E\left[  h\left(  x_{t}\right)  \right]  .$ It follows that
\begin{align}
v_{n}\left(  h\right)   &  =n^{-1/2}\sum_{t=1}^{n}\left(  m\left(  \chi
_{t},\theta_{0},g_{h}\right)  +\gamma\left(  \chi_{t}\right)  -E\left[
m\left(  \chi_{t},\theta_{0},g_{h}\right)  \right]  \right) \nonumber\\
&  =n^{-1/2}\sum_{t=1}^{n}\left[
\begin{array}
[c]{c}%
\left(  u_{t}+n^{-1/2}\left(  h\left(  x_{t}\right)  -\mu_{h}\right)  \right)
\left(  x_{t}-\mu_{x}\right) \\
\partial g_{h}\left(  x_{t}\right)  /\partial x-\theta_{nl}-f_{x}\left(
x_{t}\right)  ^{-1}\partial f_{x}\left(  x_{t}\right)  /\partial xu_{t}%
\end{array}
\right] \nonumber\\
&  -n^{-1/2}\sum_{t=1}^{n}\left[
\begin{array}
[c]{c}%
n^{-1/2}E\left[  \left(  x_{t}-\mu_{x}\right)  \left(  h\left(  x_{t}\right)
-\mu_{h}\right)  \right] \\
0
\end{array}
\right] \nonumber\\
&  =n^{-1/2}\sum_{t=1}^{n}\left[
\begin{array}
[c]{c}%
u_{t}\left(  x_{t}-\mu_{x}\right) \\
\partial g_{h}\left(  x_{t}\right)  /\partial x-\theta_{nl}-\zeta_{x}\left(
x_{t}\right)  ^{-1}\partial\zeta_{x}\left(  x_{t}\right)  /\partial xu_{t}%
\end{array}
\right] \label{Proof_H_Lemma1_D1}\\
&  -n^{-1}\sum_{t=1}^{n}\left[
\begin{array}
[c]{c}%
\left(  x_{t}-\mu_{x}\right)  \left(  h\left(  x_{t}\right)  -\mu_{h}\right)
-E\left[  \left(  x_{t}-\mu_{x}\right)  \left(  h\left(  x_{t}\right)
-\mu_{h}\right)  \right] \\
0
\end{array}
\right] \nonumber
\end{align}
where
\[
n^{-1/2}\sum_{t=1}^{n}\left[
\begin{array}
[c]{c}%
u_{t}\left(  x_{t}-\mu_{x}\right) \\
\partial g_{h}\left(  x_{t}\right)  /\partial x-\theta_{nl}-\left(  \zeta
_{x}\left(  x_{t}\right)  ^{-1}\partial\zeta_{x}\left(  x_{t}\right)  \right)
/\partial xu_{t}%
\end{array}
\right]  \rightsquigarrow v\left(  h\right)
\]
by Theorem \ref{FCLT}. This follows from $\partial g_{h}\left(  x_{t}\right)
/\partial x=\psi_{1}+n^{-1/2}\partial h\left(  x\right)  /\partial x$ and the
fact that
\[
f\left(  y,x\right)  =\partial g_{h}\left(  x\right)  /\partial x-\theta
_{nl}-\left(  \zeta_{x}\left(  x\right)  ^{-1}\partial\zeta_{x}\left(
x\right)  /\partial x\right)  u\in B_{\infty\infty}^{s}\left(  \mathbb{R}%
^{d},\vartheta\right)
\]
if $h\left(  x\right)  \in B_{\infty\infty}^{s+1}\left(  \mathbb{R}%
^{d},\vartheta\right)  $ and $\zeta_{x}\left(  x\right)  ^{-1}\partial
\zeta_{x}\left(  x\right)  /\partial x\in B_{\infty\infty}^{s}\left(
\mathbb{R}^{d},\vartheta\right)  $\textbf{.} It remains to be shown that the
second term in (\ref{Proof_H_Lemma1_D1}) is $o_{p}\left(  1\right)  .$ Since
$\left(  x_{t}-\mu_{x}\right)  \left(  h\left(  x_{t}\right)  -\mu_{h}\right)
\in B_{\infty\infty}^{s+1}\left(  \mathbb{R}^{d},\vartheta-1\right)  $ it
follows by Nickl and P\"{o}tscher (2007, Theorem 1(2)), a strong law of large
numbers for $\beta$-mixing processes and the arguments in the proof of Theorem
2.4.1. in van der Vaart and Wellner (1996, p. 122) that
\[
\sup_{h\in\mathcal{F}}\left\vert n^{-1}\sum_{t=1}^{n}\left(  x_{t}-\mu
_{x}\right)  \left(  h\left(  x_{t}\right)  -\mu_{h}\right)  -E\left[  \left(
x_{t}-\mu_{x}\right)  \left(  h\left(  x_{t}\right)  -\mu_{h}\right)  \right]
\right\vert =o_{p}\left(  1\right)  .
\]

\end{proof}

\begin{proof}
[Proof of Lemma \ref{H_Lemma2}]The proof closely follows arguments in Newey
(1994, Sections 5 and 6), except for the fact that $\left\Vert .\right\Vert
_{2,\beta}$ norms rather than Sobolev norms are the natural norms to use. This
is because stochastic equicontinuity of the empirical process determining the
limiting distribution is directly tied to the $\left\Vert .\right\Vert
_{2,\beta}$ norm. Let $\hat{m}\left(  \chi_{t},\theta,\hat{g}\right)  =\hat
{m}_{t}\left(  \theta\right)  $ and $m\left(  \chi_{t},\theta,g_{h}\right)
=m_{t}\left(  \theta\right)  .$Consider the expansion
\begin{align}
\sqrt{n}m_{n}\left(  \theta_{0}\right)   &  =n^{-1/2}\sum_{t=1}^{n}\hat{m}%
_{t}\left(  \theta_{0}\right)  =n^{-1/2}\sum_{t=1}^{n}\left(  m_{t}\left(
\theta_{0}\right)  +\gamma\left(  \chi_{t}\right)  \right) \nonumber\\
&  +n^{-1/2}\sum_{t=1}^{n}\left(  \hat{m}_{t}\left(  \theta_{0}\right)
-m_{t}\left(  \theta_{0}\right)  -D\left(  \chi_{t},\hat{g}-g_{h}\right)
\right) \label{H_Lem1_D0b}\\
&  +n^{-1/2}\sum_{t=1}^{n}\left(  D\left(  \chi_{t},\hat{g}-g_{h}\right)
-\gamma\left(  \chi_{t}\right)  \right)  . \label{H_Lem1_D0c}%
\end{align}
Let $A_{n,\varepsilon}=1\left\{  \left\Vert n^{-1/2}\sum_{t=1}^{n}\left(
\hat{m}_{t}\left(  \theta_{0}\right)  -m_{t}\left(  \theta_{0}\right)
+\gamma\left(  \chi_{t}\right)  \right)  \right\Vert >\varepsilon\right\}  $
and $B_{n,\varepsilon}=1\left\{  \left\Vert \hat{g}-g_{h}\right\Vert
_{2,\beta}\leq\varepsilon\right\}  .$ Then,
\begin{align*}
\lim_{\varepsilon\downarrow0}\underset{n\rightarrow\infty}{\lim\sup}E\left[
A_{n,\varepsilon}\right]   &  \leq\lim_{\varepsilon\downarrow0}%
\underset{n\rightarrow\infty}{\lim\sup}E\left[  A_{n,\varepsilon/2}\cap
B_{n,\varepsilon/2}\right] \\
&  +\lim_{\varepsilon\downarrow0}\underset{n\rightarrow\infty}{\lim\sup
}P\left(  \left\Vert \hat{g}-g_{h}\right\Vert _{2,\beta}>\varepsilon/2\right)
\end{align*}
where the second term is zero by Condition \ref{H_C2}(ii). Consequently, all
subsequent arguments are restricted to the set $B_{n,\varepsilon}.$ By the
Markov inequality (\ref{H_Lem1_D0b}) and (\ref{H_Lem1_D0c}) are $o_{p}\left(
1\right)  $ if
\begin{align}
&  E\left\Vert n^{-1/2}%
{\textstyle\sum\nolimits_{t=1}^{n}}
\left(  \hat{m}_{t}\left(  \theta_{0}\right)  -m_{t}\left(  \theta_{0}\right)
-D\left(  \chi_{t},\hat{g}-g_{h}\right)  \right)  \right\Vert
\label{H_Lem1_D1}\\
&  \leq E\left\Vert n^{-1/2}%
{\textstyle\sum\nolimits_{t=1}^{n}}
\hat{m}\left(  \chi_{t},\theta_{0},\hat{g}\right)  -m\left(  \chi_{t}%
,\theta_{0},g\right)  \right\Vert \nonumber\\
&  +\sqrt{n}E\left\Vert m\left(  \chi_{t},\theta_{0},\hat{g}\right)  -m\left(
\chi_{t},\theta_{0},g\right)  -D\left(  \chi_{t},\hat{g}-g_{h}\right)
\right\Vert \nonumber
\end{align}
tends to zero and
\begin{align}
&  \left\Vert n^{-1/2}\sum_{t=1}^{n}\left(  D\left(  \chi_{t},\hat{g}%
-g_{h}\right)  -\gamma\left(  \chi_{t}\right)  \right)  \right\Vert
\nonumber\\
&  \leq\left\Vert n^{-1/2}\sum_{t=1}^{n}\left(  D\left(  \chi_{t},\hat
{g}-g_{h}\right)  -\int D\left(  \chi,\hat{g}-g_{h}\right)  dP\right)
\right\Vert \label{H_Lem1_D2}\\
&  +\left\Vert \int D\left(  \chi,\hat{g}-g_{h}\right)  dP-n^{-1/2}\sum
_{t=1}^{n}\gamma\left(  \chi_{t}\right)  \right\Vert \label{H_Lem1_D3}\\
&  =o_{p}\left(  1\right)  .\nonumber
\end{align}
For (\ref{H_Lem1_D1}) note the first term on the RHS of the inequality is
\[
\sum_{t=1}^{n}\hat{m}\left(  \chi_{t},\theta_{0},\hat{g}\right)  -m\left(
\chi_{t},\theta_{0},g\right)  =\left[
\begin{array}
[c]{c}%
n\left(  \bar{y}-\mu_{y}-\theta_{l}\left(  \bar{x}-\mu_{x}\right)  \right)
\left(  \bar{x}-\mu_{x}\right) \\
0
\end{array}
\right]  .
\]
Since
\begin{align*}
E\left\Vert \left(  \bar{y}-\mu_{y}-\theta_{l}\left(  \bar{x}-\mu_{x}\right)
\right)  \left(  \bar{x}-\mu_{x}\right)  \right\Vert  &  \leq\left(
E\left\Vert \bar{y}-\mu_{y}-\theta_{l}\left(  \bar{x}-\mu_{x}\right)
\right\Vert ^{2}E\left\Vert \bar{x}-\mu_{x}\right\Vert ^{2}\right)  ^{1/2}\\
&  =O\left(  n^{-1}\right)
\end{align*}
it follows that the first term is $O\left(  n^{-1/2}\right)  .$ For the second
term in (\ref{H_Lem1_D1}) note that by the same arguments as in Newey (1994,
p. 1361) it follows that $D\left(  \chi,g\right)  =D\left(  \chi\right)
\partial g\left(  x\right)  /\partial x$ where $D\left(  \chi\right)
=\partial m\left(  \chi_{t},\theta,\phi\right)  /\partial\phi|_{\phi=\partial
g\left(  x\right)  /\partial x}=1.$ This leads to
\begin{equation}
D\left(  \chi,g-g_{n}\right)  =\left[
\begin{array}
[c]{c}%
0\\
\frac{\partial}{\partial x}\left(  g-g_{h}\right)
\end{array}
\right]  \label{D_h}%
\end{equation}
and
\[
\left\Vert \left(  m\left(  \chi,\theta,g\right)  -m\left(  \chi,\theta
,g_{h}\right)  -D\left(  \chi,g-g_{h}\right)  \right)  \right\Vert =0
\]
such that the RHS of (\ref{H_Lem1_D1}) is zero and consequently, the term in
(\ref{H_Lem1_D0b}) is $o_{p}\left(  1\right)  $.

For (\ref{H_Lem1_D2}) consider $D\left(  \chi_{t},g\right)  =f\left(  \chi
_{t}\right)  $ where only the second component is relevant. Thus focus on
\begin{equation}
f\left(  \chi_{t}\right)  =\frac{\partial g\left(  x_{t}\right)  }{\partial x}
\label{Dxp}%
\end{equation}
and where $f\left(  \chi_{t}\right)  $ is in a class of functions indexed by
$g\in\mathcal{F}_{g}\mathcal{\in}B_{\infty\infty}^{s+1}\left(  \mathbb{R}%
\text{,}\vartheta_{g}\right)  $. It follows that $f\in\mathcal{F\subset
}B_{\infty\infty}^{s}\left(  \mathbb{R}\text{,}\vartheta_{g}\right)  $ as long
as $g\in\mathcal{F}_{g}$. By Theorem \ref{FCLT} the empirical process%
\[
v_{n}\left(  f\right)  :=n^{-1/2}\sum_{t=1}^{n}\left(  f\left(  \chi
_{t}\right)  -\int f\left(  \chi_{t}\right)  dP\right)
\]
satisfies $v_{n}\left(  f\right)  \rightsquigarrow v\left(  f\right)  $ where
$v\left(  f\right)  $ is a Gaussian process. Note that Theorem \ref{FCLT} is
established by checking all the conditions for DMR, Theorem 1. That Theorem in
turn is established by establishing stochastic equicontinuity of the process
$v_{n}\left(  f\right)  .$ Now, for $f_{h,t}=\partial g_{h}\left(
x_{t}\right)  /\partial x$ and $f_{t}=\partial g\left(  x_{t}\right)
/\partial x$ it follows by from (\ref{D_h}) that
\[
n^{-1/2}\sum_{t=1}^{n}\left(  D\left(  \chi_{t},g-g_{h}\right)  -\int D\left(
\chi,g-g_{h}\right)  dP_{0}\right)  =n^{-1/2}\sum_{t=1}^{n}\left(
f_{t}-f_{h,t}-\int\left(  f_{t}-f_{h,t}\right)  dP\right)
\]
and
\begin{align}
&  \Pr\left(  \left\Vert n^{-1/2}%
{\textstyle\sum\nolimits_{t=1}^{n}}
\left(  D\left(  \chi_{t},\hat{g}-g_{h}\right)  -\int D\left(  \chi,\hat
{g}-g_{h}\right)  dP\right)  \right\Vert >\delta\right) \nonumber\\
&  \leq\Pr\left(  \sup_{\left\Vert \hat{g}-g_{h}\right\Vert _{2,\beta}%
\leq\epsilon}\left\Vert n^{-1/2}\sum_{t=1}^{n}\left(  f_{t}-f_{h,t}%
-\int\left(  f_{t}-f_{h,t}\right)  dP\right)  \right\Vert >\delta/2\right)
\label{H_Lem1_D4}\\
&  +\Pr\left(  \left\Vert \hat{g}-g_{h}\right\Vert _{2,\beta}>\delta/2\right)
\label{H_Lem1_D5}%
\end{align}
where (\ref{H_Lem1_D4}) tends to zero as $\delta\downarrow0$ by the fact that
$v_{n}\left(  f\right)  $ is stochastically equicontinuous and
(\ref{H_Lem1_D5}) tends to zero as $\delta\downarrow0$ by Condition
\ref{H_C2}(ii). Together (\ref{H_Lem1_D4}) and (\ref{H_Lem1_D5}) establishes
that (\ref{H_Lem1_D2}) is $o_{p}\left(  1\right)  $.

To establish that (\ref{H_Lem1_D3}) is $o_{p}\left(  1\right)  $ the
conditions in Newey (1994, Assumption 5.3) are sufficient: there is a function
$\gamma\left(  \chi_{t}\right)  $ such that
\begin{equation}
E\left[  \gamma\left(  \chi_{t}\right)  \right]  =0, \label{NC4}%
\end{equation}%
\begin{equation}
E\left[  \left\Vert \gamma\left(  \chi_{t}\right)  \right\Vert ^{2}\right]
<\infty, \label{NC5}%
\end{equation}
and for all $\left\Vert \hat{g}-g_{h}\right\Vert _{2,\beta}$ small enough,%
\begin{equation}
n^{-1/2}\sum_{t=1}^{n}\left(  \gamma\left(  \chi_{t}\right)  -\int D\left(
\chi_{t},\hat{g}-g_{h}\right)  dP\right)  \rightarrow^{p}0. \label{NC6}%
\end{equation}
Following Newey (1994, p.1362) use \ref{D_h} and integration by parts to
write
\begin{align*}
E\left[  D\left(  \chi,g\right)  \right]   &  =\int\left(  \frac{\partial
}{\partial x}g\left(  x\right)  \right)  \zeta_{x}\left(  x\right)
dx=-\int\left(  \partial\zeta_{x}\left(  x\right)  /\partial x\right)
\zeta_{x}\left(  x\right)  ^{-1}g\left(  x\right)  \zeta_{x}\left(  x\right)
dx\\
&  =-E\left[  \left(  \partial\zeta_{x}\left(  x\right)  /\partial x\right)
\zeta_{x}\left(  x\right)  ^{-1}g\left(  x\right)  \right]  .
\end{align*}
Let $\tau$ index a path (see Newey, 1994, p.1352 for a definition). Let
$g\left(  x_{t},\tau\right)  $ be the projection of $y_{t}$ on $\mathcal{F}$
for a path $\tau$ (see Newey, 1994, p. 1361). For $\delta\left(  x\right)
=\left(  \partial\zeta_{x}\left(  x\right)  /\partial x\right)  \zeta
_{x}\left(  x\right)  ^{-1}$ it follows by the projection theorem that
$E_{\tau}\left[  \delta\left(  x_{t}\right)  g\left(  x_{t},\tau\right)
\right]  =E_{\tau}\left[  \delta\left(  x_{t}\right)  y_{t}\right]  .$ Then,
Newey (1994, Eq. 4.5) implies that
\[
\partial E\left[  D\left(  \chi,g\left(  \tau\right)  \right)  \right]
/\partial\tau=E\left[  \delta\left(  x_{t}\right)  \left(  y_{t}-g\left(
x_{t}\right)  \right)  S\left(  \chi_{t}\right)  \right]
\]
where $S\left(  \chi_{t}\right)  $ is the score of a regular path (see Newey,
1994, Theorem 2.1). By Newey (1994, Theorem 4.1) the correction term
$\gamma\left(  \chi_{t}\right)  $ is given by
\[
\gamma\left(  \chi_{t}\right)  =\delta\left(  x_{t}\right)  u_{t}.
\]
Then $E\left[  \gamma\left(  \chi_{t}\right)  \right]  =0$ follows immediately
from $E\left[  u_{t}|x_{t}\right]  =0$.

For (\ref{NC5}) note that
\[
E\left[  \left\vert \gamma\left(  \chi_{t}\right)  \right\vert |x_{t}\right]
\leq\delta\left(  x_{t}\right)  E\left[  \left\vert u_{t}\right\vert
|x_{t}\right]
\]
Then,
\[
E\left[  \left\vert \gamma\left(  \chi_{t}\right)  \right\vert ^{2}\right]
\leq E\left[  \delta\left(  x_{t}\right)  ^{2}E\left[  u_{t}^{2}|x_{t}\right]
\right]  \leq E\left[  \delta\left(  x_{t}\right)  ^{2}\sigma_{t}^{2}\left(
x_{t}\right)  \right]  <\infty
\]
where $\sigma_{t}^{2}\left(  x_{t}\right)  =E\left[  u_{t}^{2}|x_{t}\right]  $
and $E\left[  \delta\left(  x_{t}\right)  ^{2}\sigma_{t}^{2}\left(
x_{t}\right)  \right]  $ is bounded by Condition (\ref{H_C2})(i)

Finally, (\ref{NC6}) is satisfied by Condition (\ref{H_C2})(iii). This
establishes that (\ref{H_Lem1_D0b}) and (\ref{H_Lem1_D0c}) are $o_{p}\left(
1\right)  $ and therefore that the first claim of the Lemma holds. The second
part of the Lemma follows from Lemma (\ref{H_Lemma1}).
\end{proof}

\begin{proof}
[Proof of Lemma \ref{H_Lemma3}]The estimator $\hat{\theta}_{\kappa}$ solves
\[
m_{n}\left(  \hat{\theta}_{\kappa}\right)  =n^{-1}\sum_{t=1}^{n}\hat{m}\left(
\chi_{t},\hat{\theta}_{\kappa},\hat{g}_{\kappa}\right)  =0
\]
which means that it can be expressed in closed form as as
\[
\left[
\begin{array}
[c]{c}%
\hat{\theta}_{l}\\
\hat{\theta}_{nl}%
\end{array}
\right]  =n^{-1}\hat{Q}^{-1}\left[
\begin{array}
[c]{c}%
P_{1}^{\prime}My\\
\hat{\Psi}^{\prime}\left(  P^{\prime}P\right)  ^{-1}P^{\prime}y
\end{array}
\right]  .
\]
Using the fact that
\[
\left[
\begin{array}
[c]{c}%
\psi_{1}\\
\theta_{nl}%
\end{array}
\right]  =n^{-1}\hat{Q}^{-1}\left[
\begin{array}
[c]{c}%
\psi_{1}P_{1}^{\prime}MP_{1}\\
n\theta_{nl}%
\end{array}
\right]
\]
it follows that
\begin{align}
\sqrt{n}\left[
\begin{array}
[c]{c}%
\hat{\theta}_{l}-\psi_{1}\\
\hat{\theta}_{nl}-\theta_{nl}%
\end{array}
\right]   &  =\hat{Q}^{-1}\frac{1}{\sqrt{n}}\sum_{t=1}^{n}\left[
\begin{array}
[c]{c}%
\left(  x_{t}-\bar{x}\right)  \left(  y_{t}-\psi_{1}\left(  x_{t}-\bar
{x}\right)  \right) \\
\partial P^{\kappa}\left(  x_{t}\right)  ^{\prime}/\partial x\hat{\psi
}_{\kappa}-\theta_{nl}%
\end{array}
\right] \label{Proof_H_Lemma3_D4}\\
&  =\hat{Q}^{-1}\frac{1}{\sqrt{n}}\sum_{t=1}^{n}\hat{m}_{t}\left(  \chi
_{t},\theta_{0},\hat{g}_{\kappa}\right)  \label{Proof_H_Lemma3_D4a}%
\end{align}
By Condition \ref{H_C3}(i) it follows that $\hat{Q}^{-1}-Q^{-1}=o_{p}\left(
1\right)  .$ Then it follows by Condition \ref{H_C3}(ii) and (iii) that
\[
\sqrt{n}\left(  \hat{\theta}_{\kappa}-\theta_{0}\right)  =Q^{-1}\frac{1}%
{\sqrt{n}}\sum_{t=1}^{n}m_{t}\left(  \chi_{t},\theta_{0},\hat{g}_{\kappa
}\right)  +o_{p}\left(  1\right)  .
\]
The result then follows from Lemmas \ref{H_Lemma1} and \ref{H_Lemma2}.
\end{proof}

\begin{proof}
[Proof of \ref{Hausman-Theorem}]It follows directly from Lemma \ref{H_Lemma3}
that for fixed $h,$
\begin{align*}
\tilde{H}_{1}^{1/2}  &  :=\frac{\sqrt{n}\left(  \hat{\theta}_{l}-\hat{\theta
}_{nl}\right)  }{\sqrt{e^{\prime}\hat{Q}^{-1}\hat{\Gamma}\hat{Q}^{-1}e}}%
=\frac{e^{\prime}\left(  \sqrt{n}\left(  \hat{\theta}_{\kappa}-\theta
_{0}\right)  \right)  +\sqrt{n}e^{\prime}\theta_{0}}{\sqrt{e^{\prime}\hat
{Q}^{-1}\hat{\Gamma}\hat{Q}^{-1}e}}\\
&  \rightarrow_{d}\frac{Q^{-1}v\left(  h\right)  }{\sqrt{e^{\prime}%
Q^{-1}\Gamma\left(  h\right)  Q^{-1}e}}+\frac{e^{\prime}Q^{-1}\tilde{b}\left(
h\right)  -E\left[  \partial h\left(  x\right)  /\partial x\right]  }%
{\sqrt{e^{\prime}Q^{-1}\Gamma\left(  h\right)  Q^{-1}e}}%
\end{align*}
where
\[
\frac{e^{\prime}Q^{-1}\tilde{b}\left(  h\right)  }{\sqrt{e^{\prime}%
Q^{-1}\Gamma\left(  h\right)  Q^{-1}e}}=\frac{b\left(  h\right)  }{\sigma
_{x}^{2}\sqrt{e^{\prime}Q^{-1}\Gamma\left(  h\right)  Q^{-1}e}}%
\]
and
\[
\frac{Q^{-1}v\left(  h\right)  }{\sqrt{e^{\prime}Q^{-1}\Gamma\left(  h\right)
Q^{-1}e}}\sim N\left(  0,1\right)  .
\]
The result follows now from the continuous mapping theorem and the fact that
$\tilde{H}_{1}=\left(  \tilde{H}_{1}^{1/2}\right)  ^{2}$. The result for
$\tilde{H}_{2}$ follows in the same way.
\end{proof}

\newpage

\end{document}